\title{Spartan Bipartite Graphs are Essentially Elementary} 
\titlerunning{Spartan Bipartite Graphs are Essentially Elementary} 
\author{Neeldhara Misra}{Department of Computer Science and Engineering,\and Indian Institute of Technology, Gandhinagar \and \url{http://www.neeldhara.com} }{neeldhara.m@iitgn.ac.in}{https://orcid.org/0000-0003-1727-5388}{}
\author{Saraswati Girish Nanoti}{Department of Mathematics,\and Indian Institute of Technology, Gandhinagar}{nanoti\_saraswati@iitgn.ac.in}{}{}
\authorrunning{N. Misra and S. Nanoti} 
\keywords{Bipartite Graphs, Eternal Vertex Cover, Perfect Matchings, Elementary, Spartan} 
\patchcmd{\BR@backref}{\newblock}{\newblock($\uparrow$~}{}{}
\patchcmd{\BR@backref}{\par}{)\par}{}{}
\providecommand\@dotsep{5}
\def\listtodoname{}
\def\listoftodos{\@starttoc{tdo}\listtodoname}
\newcounter{nmcomment}
\tikzset{
    small circles/.style={circle,inner sep=2pt,fill=#1},
    hollow circles/.style n args={2}{circle,inner sep=#1,draw=#2,thick},
    stars/.style={star,inner sep=2pt}
}
\newcommand{\NPH}{\ensuremath{\mathsf{NP}}-hard\xspace}
\newcommand{\EVC}{\textsc{Eternal Vertex Cover}}
\begin{document}

\maketitle

\begin{abstract}
    We study a two-player game on a graph between an attacker and a defender. To begin with, the defender places guards on a subset of vertices. In each move, the attacker attacks an edge. The defender must move at least one guard across the attacked edge to defend the attack. The defender wins if and only if the defender can defend an infinite sequence of attacks. The smallest number of guards with which the defender has a winning strategy is called the eternal vertex cover number of a graph $G$ and is denoted by $evc(G)$. It is clear that $evc(G)$ is \emph{at least} $mvc(G)$, the size of a minimum vertex cover of $G$. We say that $G$ is \emph{Spartan} if $evc(G) = mvc(G)$. The characterization of Spartan graphs has been largely open. In the setting of bipartite graphs on $2n$ vertices where every edge belongs to a perfect matching, an easy strategy is to have $n$ guards that always move along perfect matchings in response to attacks. We show that these are essentially the only Spartan bipartite graphs.
\end{abstract}

\section{Introduction}
\label{sec:introduction}

A \emph{vertex cover} of a graph $G$ is a subset $S$ of vertices such that every edge in $G$ has at least one of its endpoints in $S$. An \emph{optimal} vertex cover of a graph $G$ is a vertex cover of the smallest possible size and the size of this optimal vertex cover is denoted by $mvc(G)$. A \emph{bipartite graph} $G = (V,E)$ is a graph whose vertex set can be partitioned into two independent sets, say $V = (A \cup B)$, that is every edge is between a vertex in $A$ and one in $B$. Clearly, both $A$ and $B$ are vertex covers of $G$. If a bipartite graph $G = (A \cup B, E)$ is connected and its \emph{only} optimal vertex covers are $A$ and $B$, then we say that $G$ is \emph{elementary}. If every connected component of a bipartite graph is elementary, then we call it \emph{essentially elementary}.

Klostermeyer and Mynhardt~\cite{KM09} introduced the notion of the \emph{eternal vertex cover} number of a graph $G$. There are two players --- an \emph{attacker} and a \emph{defender}, who are playing on a simple, undirected graph $G$. In the beginning, the defender can choose to place guards on some of the vertices of $G$. The attacker's move involves choosing an edge to ``attack''. The defender is able to ``defend'' this attack if she can move the guards along the edges of the graph in such a way that at least one guard moves along the attacked edge. If both the endpoints of the attacked edge have guards on them, the guards can simply exchange their places and the attack is defended as well as the original configuration of guards is restored. Note that any number of guards can move after an attack, but each guard can move only one ``step'', i.e., just along a single edge. If such a movement is not possible, then the attacker wins. If the defender can defend the graph against an infinite sequence of attacks, then the defender wins (see~\Cref{fig:intro}). The minimum number of guards with which the defender has a winning strategy is called the \emph{eternal vertex cover number} of the graph $G$ and is denoted by $evc(G)$. 

There are two variations of this game, one where only one guard can be present per vertex at a given time and a variant where more than one guard can be present on a vertex at a given time. Our results apply to both the variants of the game.

\begin{figure}[ht]
\centering
\begin{subfigure}[b]{0.42\textwidth}
\resizebox{\textwidth}{!}{%
\begin{tikzpicture}[scale=0.25]
\node[small circles=DodgerBlue] (A1) at (0,0) {};
\node[small circles=DodgerBlue] (B1) [below of=A1] {};

\foreach \x in {2,3,...,6}{
\pgfmathtruncatemacro{\y}{\x-1}
\node[small circles=DodgerBlue] (A\x) [right of=A\y] {};
}
\foreach \x in {2,3,...,5}{
\pgfmathtruncatemacro{\y}{\x-1}
\node[small circles=DodgerBlue] (B\x) [right of=B\y] {};
};
\node (B6) [right of = B5] {};    
\node[small circles=DodgerBlue] (B7) [right of = B6] {};

\draw (A1) -- (B1) -- (B2) -- (A2) -- (B3) -- (A3) -- (A4) -- (B4)  -- (B5) -- (A4) -- (A5) -- (A6) -- (B5) -- (A6) -- (B7);
\draw (A1) -- (A2);
\draw (B1) -- (A2);
\draw (A1) -- (B2);

\node[star,fill=DarkOrange,inner sep=2pt] at (A1) {};
\node[star,fill=DarkOrange,inner sep=2pt] at (A2) {};
\node[star,fill=DarkOrange,inner sep=2pt] at (B1) {};
\node[star,fill=DarkOrange,inner sep=2pt] at (A3) {};
\node[star,fill=DarkOrange,inner sep=2pt] at (A4) {};
\node[star,fill=DarkOrange,inner sep=2pt] at (B5) {};
\node[star,fill=DarkOrange,inner sep=2pt] at (A6) {};
\node[star,fill=DarkOrange,inner sep=2pt] at (B4) {};
\end{tikzpicture}
}
\caption{The intial positions of the guards are denoted by the star-shaped vertices.}
\label{fig:intro1} 
\end{subfigure}
\begin{subfigure}[b]{0.42\textwidth}
\resizebox{\textwidth}{!}{%
\begin{tikzpicture}[scale=0.25]


\node[small circles=DodgerBlue] (A1) at (0,0) {};
\node[small circles=DodgerBlue] (B1) [below of=A1] {};

\foreach \x in {2,3,...,6}{
\pgfmathtruncatemacro{\y}{\x-1}
\node[small circles=DodgerBlue] (A\x) [right of=A\y] {};
}
\foreach \x in {2,3,...,5}{
\pgfmathtruncatemacro{\y}{\x-1}
\node[small circles=DodgerBlue] (B\x) [right of=B\y] {};
};
\node (B6) [right of = B5] {};    
\node[small circles=DodgerBlue] (B7) [right of = B6] {};

\draw (A1) -- (B1) -- (B2) -- (A2) -- (B3) -- (A3) -- (A4) -- (B4)  -- (B5) -- (A4) -- (A5) -- (A6) -- (B5) -- (A6) -- (B7);
\draw (A1) -- (A2);
\draw (B1) -- (A2);
\draw (A1) -- (B2);

\node[star,fill=DarkOrange,inner sep=2pt] at (A1) {};
\node[star,fill=DarkOrange,inner sep=2pt] at (A2) {};
\node[star,fill=DarkOrange,inner sep=2pt] at (B1) {};
\node[star,fill=DarkOrange,inner sep=2pt] at (A3) {};
\node[star,fill=DarkOrange,inner sep=2pt] at (A4) {};
\node[star,fill=DarkOrange,inner sep=2pt] at (B5) {};
\node[star,fill=DarkOrange,inner sep=2pt] at (A6) {};
\node[star,fill=DarkOrange,inner sep=2pt] at (B4) {};
\draw[thick,color=IndianRed,decoration = {zigzag,segment length = 3pt,amplitude=1pt},decorate] (A6) -- (B7);
\end{tikzpicture}
}
\caption{The attacker's move targets the edge to the far-right, highlighted by a wavy red line.}
\label{fig:intro2}
\end{subfigure}
\begin{subfigure}[b]{0.42\textwidth}
\resizebox{\textwidth}{!}{%
\begin{tikzpicture}


\node[small circles=DodgerBlue] (A1) at (0,0) {};
\node[small circles=DodgerBlue] (B1) [below of=A1] {};

\foreach \x in {2,3,...,6}{
\pgfmathtruncatemacro{\y}{\x-1}
\node[small circles=DodgerBlue] (A\x) [right of=A\y] {};
}
\foreach \x in {2,3,...,5}{
\pgfmathtruncatemacro{\y}{\x-1}
\node[small circles=DodgerBlue] (B\x) [right of=B\y] {};
};
\node (B6) [right of = B5] {};    
\node[small circles=DodgerBlue] (B7) [right of = B6] {};

\draw (A1) -- (B1) -- (B2) -- (A2) -- (B3) -- (A3) -- (A4) -- (B4)  -- (B5) -- (A4) -- (A5) -- (A6) -- (B5) -- (A6) -- (B7);
\draw (A1) -- (A2);
\draw (B1) -- (A2);
\draw (A1) -- (B2);

\node[star,fill=DarkOrange,inner sep=2pt] at (A1) {};
\node[star,fill=DarkOrange,inner sep=2pt] at (A2) {};
\node[star,fill=DarkOrange,inner sep=2pt] at (B1) {};
\node[star,fill=DarkOrange,inner sep=2pt] at (A3) {};
\node[star,fill=DarkOrange,inner sep=2pt] at (A4) {};
\node[star,fill=DarkOrange,inner sep=2pt] at (B5) {};
\node[star,fill=DarkOrange,inner sep=2pt] at (B7) {};
\node[star,fill=DarkOrange,inner sep=2pt] at (B4) {};
\draw[thick,color=SeaGreen,decoration = {zigzag,segment length = 3pt,amplitude=1pt},decorate] (A6) -- (B7);

\end{tikzpicture}
}
\caption{The defender responds to defend the attack by moving a guard along the attacked edge.}
\label{fig:Ng2}
\end{subfigure}
\begin{subfigure}[b]{0.42\textwidth}
\resizebox{\textwidth}{!}{%
\begin{tikzpicture}


\node[small circles=DodgerBlue] (A1) at (0,0) {};
\node[small circles=DodgerBlue] (B1) [below of=A1] {};

\foreach \x in {2,3,...,6}{
\pgfmathtruncatemacro{\y}{\x-1}
\node[small circles=DodgerBlue] (A\x) [right of=A\y] {};
}
\foreach \x in {2,3,...,5}{
\pgfmathtruncatemacro{\y}{\x-1}
\node[small circles=DodgerBlue] (B\x) [right of=B\y] {};
};
\node (B6) [right of = B5] {};    
\node[small circles=DodgerBlue] (B7) [right of = B6] {};

\draw (A1) -- (B1) -- (B2) -- (A2) -- (B3) -- (A3) -- (A4) -- (B4)  -- (B5) -- (A4) -- (A5) -- (A6) -- (B5) -- (A6) -- (B7);
\draw (A1) -- (A2);
\draw (B1) -- (A2);
\draw (A1) -- (B2);

\node[star,fill=DarkOrange,inner sep=2pt] at (A1) {};
\node[star,fill=DarkOrange,inner sep=2pt] at (A2) {};
\node[star,fill=DarkOrange,inner sep=2pt] at (B1) {};
\node[star,fill=DarkOrange,inner sep=2pt] at (A3) {};
\node[star,fill=DarkOrange,inner sep=2pt] at (A5) {};
\node[star,fill=DarkOrange,inner sep=2pt] at (B5) {};
\node[star,fill=DarkOrange,inner sep=2pt] at (B7) {};
\node[star,fill=DarkOrange,inner sep=2pt] at (B4) {};
\end{tikzpicture}
}
\caption{Simultaneously, the defender moves another guard to ensure that no edges are left vulnerable. This is the resultant position of the guards.}
\label{fig:intro4}
\end{subfigure}
\caption{An attack that is defended by moving two guards.}
\label{fig:intro}
\end{figure}
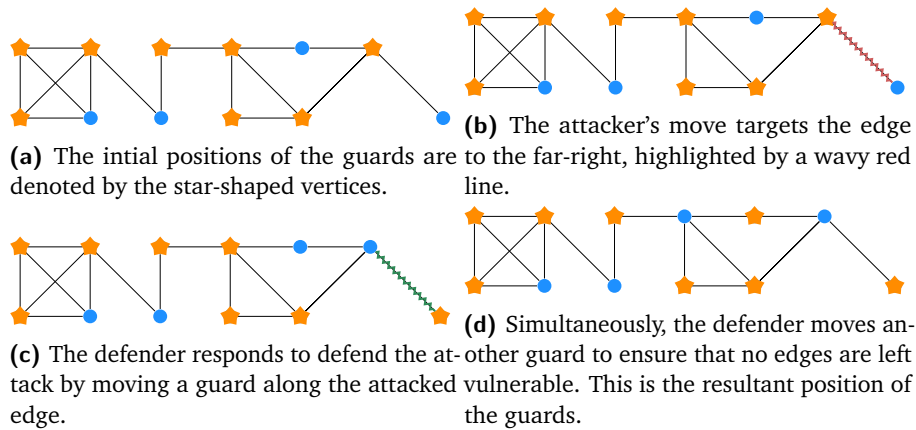

If $S_\ell$ is the subset of vertices that have guards on them after the defender has played her $\ell$-th move, and $S_\ell$ is not a vertex cover of $G$, then the attacker can target any of the uncovered edges to win the game. Therefore, when the defender has a winning strategy, it implies that she can always ``reconfigure'' one vertex cover into another in response to any attack, where the reconfiguration is constrained by the rules of how the guards can move and the requirement that at least one of these guards needs to move along the attacked edge. It follows that $evc(G)\geq mvc(G)$, where $mvc(G)$ is the size of the smallest vertex cover for any graph $G$.

We call graphs $G$ which enjoy $evc(G) = mvc(G)$ \emph{Spartan}, to indicate that these graphs can manage attacks without using ``additional'' guards. Understanding Spartan graphs for several special cases have been addressed in the literature~(see, for instance, \cite{BCFPRW2021}). However, the characterization of bipartite graphs that are Spartan has been left open. We show that bipartite graphs are Spartan if and only if they are essentially elementary.




We also consider a natural extension of the eternal vertex cover problem: where the guards are allowed to move two or more steps on each turn instead of one, without retracing. An attack is defended only if the guard moves along the attacked edge on their first step. We show that with just one additional step, we need either just as many guards as the size of the smallest vertex cover or one more: so in this variant all graphs are ``almost'' Spartan. We also show that the extra guard is needed only if the graph has a degree one vertex not contained in any minimum vertex cover. Finally, we show that allowing for more than one extra step is the same as allowing exactly one extra step. 

\textbf{Related Work.} Among characterizations related to Spartan graphs, we have the following. Let the graph class $\mathcal{F}$ denote the class of all connected graphs $G$ for which each minimum vertex cover of $G$ that contains all the cut vertices of $G$ induces a connected subgraph in $G$. (A cut vertex is a vertex whose removal disconnects the graph.) Let $G(V,E)$ be a graph that belongs to $\mathcal{F}$, with at least two vertices, and $X\subset V$ be the set of cut vertices of $G$. Then Babu et al.~\cite{BCFPRW2019} showed that $G$ is Spartan if and only if for every vertex $v\in V\backslash X$, there exists a minimum vertex cover $S_v$ of $G$ such that $X\cup \{v\}\subset S_v$. Klostermeyer and Mynhardt~\cite{KM2011} also study graphs for which the eternal vertex cover number coincides with the eternal domination number, a closely related notion. This is a similar game, except that the attacks happen on vertices and if the attacked vertex does not have a guard already, then a guard from a neighbouring vertex must come to the attacked vertex. All other guards can stay on their initial position or move to a neighbouring vertex. The minimum number of guards required to protect the graph from an infinite sequence of attacks is called the \emph{eternal domination number}. 

Note that twice as many vertices as the $mvc(G)$ always suffice to defend against any sequence of attacks --- by placing guards on both endpoints of any maximum matching to begin with and after any attack, reconfiguring the guards to obtain another maximum matching. Using this strategy, a  $2-$approximation algorithm for $\EVC$ was obtained  by Fomin et al.~\cite{FGGKS2010}. This also implies $mvc(G)\leq evc(G) \leq 2mvc(G)$. \cite{KM09} gave a characterization of the graphs for which the upper bound is achieved. The notion of elementary graphs was considered by Lov\'asz and Plummer~\cite{LLPM-matchingbook} and several useful characterizations were given by Hetyei~\cite{hetyei1964rectangular}. We adapt these definitions to suit the context of bipartite graphs.


\paragraph*{Methodology.}

Recall that a \emph{matching} is a collection of vertex disjoint edges, and a matching which contains one edge incident to each vertex of a graph is called a \emph {perfect matching}. We say that an edge is \emph{allowed} if it is contained in some perfect matching. A graph $G$ is said to be \emph{elementary} if and only if it is connected and every edge is allowed.

It turns out that a connected bipartite graph $G = (A \cup B, E)$ is elementary if and only if its \emph{only} optimal vertex covers are $A$ and $B$, as we make explicit in the proposition below. 


\begin{proposition}[\cite{hetyei1964rectangular}]\label{prop1}
The following are equivalent for any connected bipartite graph $G = (A \cup B, E)$.
\begin{enumerate}
\item $A$ and $B$ are the only minimum vertex covers of $G$, and in particular $|A| = |B|.$
\item Every edge in $G$ is allowed.
\end{enumerate}
\end{proposition}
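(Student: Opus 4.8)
The plan is to prove the two implications separately, using König's theorem throughout: in a bipartite graph the size of a maximum matching equals $mvc(G)$, and an edge $e = ab$ is allowed exactly when the graph $G - a - b$ has a perfect matching. Both directions will revolve around converting the existence (or nonexistence) of a perfect matching into statements about which vertex covers of size $mvc(G)$ can occur.

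For the implication $(1)\Rightarrow(2)$, I would argue by contradiction on a single edge. Since $A$ and $B$ are minimum vertex covers with $|A|=|B|=n$, König's theorem gives $mvc(G)=n$ and hence a perfect matching. Suppose some edge $e=ab$ with $a\in A,\ b\in B$ is not allowed. Then $G-a-b$ has no perfect matching, so by König its minimum vertex cover $C'$ satisfies $|C'|\le n-2$. Consequently $C'\cup\{a,b\}$ is a vertex cover of $G$ of size at most $n$, hence a minimum vertex cover; but it contains both $a\in A$ and $b\in B$, so it is neither $A$ nor $B$, contradicting $(1)$. This direction is short and I expect no real difficulty.

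For $(2)\Rightarrow(1)$, note first that a connected $G$ with at least one allowed edge has a perfect matching, so $|A|=|B|=n$, $mvc(G)=n$, and $A,B$ are minimum vertex covers; the task is to rule out any other. Let $S$ be a minimum vertex cover with $S\neq A$ and $S\neq B$, and write $A_1=A\setminus S$, $A_2=A\cap S$, $B_1=B\cap S$, $B_2=B\setminus S$. Then $S\neq A$ forces $A_1\neq\emptyset$ and $S\neq B$ forces $B_2\neq\emptyset$, while $A_2=\emptyset$ (resp. $B_1=\emptyset$) would put $S$ inside $B$ (resp. $A$) and hence, by $|S|=n$, equal to it; so all four blocks are nonempty. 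From $|A_2|+|B_1|=|S|=n=|A_1|+|A_2|$ we get $|A_1|=|B_1|$, and since $S$ is a vertex cover there are no edges between $A_1$ and $B_2$. The structural heart of the argument is then the claim that \emph{every} perfect matching matches $A_1$ onto $B_1$ and $A_2$ onto $B_2$: the only neighbours of $A_1$ lie in $B_1$, and since $|A_1|=|B_1|$ the vertices of $A_1$ exhaust $B_1$, forcing the vertices of $A_2$ into $B_2$. Finally, applying connectivity to the cut $(A_1\cup B_1,\ A_2\cup B_2)$ produces a crossing edge, which by the block structure can only join $B_1$ to $A_2$; such an edge lies in no perfect matching and is therefore not allowed, contradicting $(2)$.

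I expect the implication $(2)\Rightarrow(1)$ to be the main obstacle. The delicate point is recognising that a minimum vertex cover $S\neq A,B$ rigidly partitions $G$ into the four blocks $A_1,A_2,B_1,B_2$ whose matching behaviour is completely forced, and then using connectivity to extract a single edge that no perfect matching can route through. The implication $(1)\Rightarrow(2)$ and the preliminary König-theorem bookkeeping are routine by comparison.
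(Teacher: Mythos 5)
Your proof is correct, but there is nothing in the paper to compare it against: the paper states Proposition~\ref{prop1} as a known result, citing Hetyei~\cite{hetyei1964rectangular}, and never proves it. So you have supplied a self-contained argument where the paper uses a black box. Both of your implications are sound. For $(1)\Rightarrow(2)$ the deficiency argument works exactly as you say: if $ab$ were not allowed, then $G-a-b$ has maximum matching, hence (by K\"{o}nig) minimum vertex cover, of size at most $n-2$, and adding back $a$ and $b$ gives a minimum vertex cover of $G$ meeting both sides of the bipartition, contradicting $(1)$. For $(2)\Rightarrow(1)$ your four-block decomposition is the standard route and every step checks out: since $|S|=|A|=|B|=n$, each of $A_1=A\setminus S$, $A_2=A\cap S$, $B_1=B\cap S$, $B_2=B\setminus S$ is forced to be nonempty (emptiness of any block collapses $S$ onto $A$ or $B$), the covering property of $S$ forbids $A_1$--$B_2$ edges, the count $|A_1|=|B_1|$ rigidifies every perfect matching (each matches $A_1$ bijectively onto $B_1$ and therefore $A_2$ onto $B_2$), and connectivity across the cut $(A_1\cup B_1,\ A_2\cup B_2)$ yields a $B_1$--$A_2$ edge, which by the rigidity lies in no perfect matching, contradicting $(2)$. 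The only caveat is the degenerate edgeless one-vertex graph, where $(2)$ holds vacuously but $(1)$ fails since one side of the bipartition is empty; like the paper, you implicitly assume $G$ has an edge, which is the only setting in which the proposition is invoked.
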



Notice that it is easy to see that for bipartite graphs, if $G$ is elementary then it is Spartan (see also~\Cref{ele} and Araki, Fujito, and Inoue~\cite{AFI2015}). Indeed, we start by placing guards on all vertices of $A$. If the edge $e$ is attacked, then we move guards to $B$ along the edges of the perfect matching $M$ that contains the edge $e$. Future attacks can be similarly defended, so the guards alternate between occupying $A$ and $B$ in response to edge attacks.  

What we focus on demonstrating in this contribution is the converse, namely that if a connected bipartite graph $G$ does not have this structure, then it is also not Spartan: in some sense, the trivial scenario is the only one in which we can hope to manage without employing any ``extra'' guards. 

\textbf{Spartan $\implies$ Perfect Matching.} To begin with, notice that if $G = (A \cup B, E)$ is connected and does not have a perfect matching, we already need more than $mvc(G)$ guards (see also~\Cref{prop:spartanimpliespm}). In particular, let $M$ be a maximum matching in $G$. Recall that $|M| = mvc(G)$ by K\"{o}nig's theorem. Without loss of generality, let $b \in B$ be a vertex not incident to any edge of $M$. Since $G$ is connected, $b$ has some neighbor $a \in A$, which must belong to any vertex cover of $G$. In particular, the defender is forced to position a guard on $a$ in the initial configuration. If the edge $ab$ is attacked, the guard on $a$ is forced to move to $b$. This creates a situation where $|M|-1$ guards have to reposition themselves to protect all the edges of $M$ --- which is impossible since the edges of $M$ are disjoint. 

Therefore, a necessary condition for a connected bipartite graph to be Spartan is that it must admit a perfect matching. This is, however, evidently not sufficient: indeed, there are connected bipartite graphs with perfect matchings that are not Spartan (see also~\Cref{fig:intro3}).

\begin{figure}[ht]
\centering
\resizebox{0.2\textwidth}{!}{%
    \begin{tikzpicture}[scale=0.23] {\tikzset{decoration={snake,amplitude=.4mm,segment length=2mm, post length=0mm,pre length=0mm}}
\node[small circles=DodgerBlue] (A1) at (0,0) {};
\node[small circles=DodgerBlue] (A2) [below of=A1] {};
\node[small circles=DodgerBlue] (B1) [right of=A1] {};
\node[small circles=DodgerBlue] (B2) [right of=A2] {};

\draw (A1) -- (B1);
\draw[red,decorate] (B1) -- (A2);
\draw (A2) -- (B2);

\node[star,fill=DarkOrange,inner sep=2pt] at (B1) {};
\node[star,fill=DarkOrange,inner sep=2pt] at (B2) {};
}
\end{tikzpicture}
}
~~~~~~~~
\resizebox{0.2\textwidth}{!}{%
    \begin{tikzpicture}[scale=0.25]
    {\tikzset{decoration={snake,amplitude=.4mm,segment length=2mm, post length=0mm,pre length=0mm}}
\node[small circles=DodgerBlue] (A1) at (0,0) {};
\node[small circles=DodgerBlue] (A2) [below of=A1] {};
\node[small circles=DodgerBlue] (B1) [right of=A1] {};
\node[small circles=DodgerBlue] (B2) [right of=A2] {};

\draw[red,decorate] (A1) -- (B1);
\draw (B1) -- (A2);
\draw (A2) -- (B2);

\node[star,fill=DarkOrange,inner sep=2pt] at (B1) {};
\node[star,fill=DarkOrange,inner sep=2pt] at (A2) {};
}
\end{tikzpicture}
}
~~~~~~~~
\resizebox{0.2\textwidth}{!}{%
    \begin{tikzpicture}[scale=0.25]{\tikzset{decoration={snake,amplitude=.4mm,segment length=2mm, post length=0mm,pre length=0mm}}
\node[small circles=DodgerBlue] (A1) at (0,0) {};
\node[small circles=DodgerBlue] (A2) [below of=A1] {};
\node[small circles=DodgerBlue] (B1) [right of=A1] {};
\node[small circles=DodgerBlue] (B2) [right of=A2] {};

\node[star,fill=DarkOrange,inner sep=2pt] at (A1) {};
\node[star,fill=DarkOrange,inner sep=2pt] at (A2) {};

\draw (A1) -- (B1);
\draw[red,decorate] (B1) -- (A2);
\draw (A2) -- (B2);
}
\end{tikzpicture}
}
\caption{A graph with a perfect matching where $mvc(G)$ many guards do not suffice. For each choice of $mvc(G)$ many guards at the initial configuration, attacking the dashed edge shown in the figure leads to an indefensible position in one or two steps.}
\label{fig:intro3}
\end{figure}
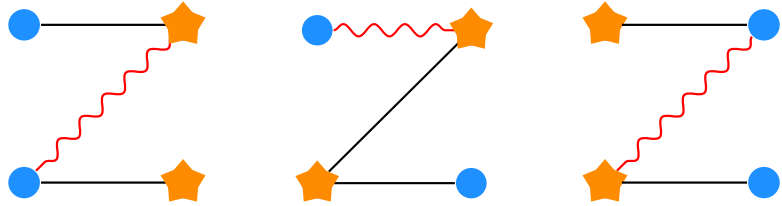

\textbf{Spartan $\implies$ No Degree One Vertices.} We observe that a connected bipartite Spartan graph $G = (A \cup B, E)$ on at least three vertices cannot have a degree one vertex (see also~\Cref{prop:nodeg1} and Babu et al.~\cite{BCFPRW2019}).  Assume to the contrary that $G$ has a degree one vertex, say $a \in A$. Let $b$ denote $a$'s unique neighbor in $G$. Since $G$ has at least one vertex other than $a$ and $b$, and $G$ is connected, $b$ must have another neighbor in $G$: say $a^\prime$. Now, if the initial configuration has a guard on $a$, then attack the edge $ba^\prime$; otherwise attack the edge $ab$ to force a guard on $a$ and then attack $ba^\prime$. Both cases lead to a scenario where both endpoints of the edge $ab$ have guards on them. However, since $G$ is Spartan, it has a perfect matching $M$. Further, since $a$ is a degree one vertex, $M$ is forced to contain the edge $ab$. This in turn implies that at least $|A|-1$ guards are required to defend the rest of the graph, contradicting the assumption that $G$ is Spartan.

\textbf{Proof by Structure-Preserving Contractions.} So far, we have seen that a connected bipartite graph $G = (A \cup B, E)$ which is Spartan must have a perfect matching and no degree one vertices. Fix an arbitrary perfect matching $M$, and let $ab \in M$. Let $P$ be a maximal walk (without repeating edges) starting at $a$, alternating between edges of $M$ and $E \setminus M$. The terminal vertex of $P$ is either $a$, or a vertex $v$ which must have a neighbor on $P$, since $G$ has no degree one vertices. So from $P$, we can derive a cycle whose edges alternate between $M$ and $E \setminus M$ (see also~\Cref{NAC}).

At a high level (see~\Cref{fig:my_label-a}), our proof relies on ``contracting'' each such cycle to a single matching edge and arguing that the resulting graph remains Spartan provided the original graph was also Spartan. In particular, we show the following. If $G$ is Spartan to begin with, and $M$ is a perfect matching with $S$ is a set of endpoints of $M'\subseteq M$ such that $G[S]$ is elementary, then ``replacing'' $G[S]$ with a single edge until this operation is no longer possible; keeps the graph Spartan (see also~\Cref{MCP}). Here it is useful that any bipartite graph obtained by adding edges to a cycle (while preserving the bipartite-ness) is elementary, a fact that we establish separately (see also~\Cref{cycle}) --- thus we can simply keep contracting along cycles $C$. The idea of the proof is then the following. Consider what happens once we are stuck, i.e., we cannot contract any further: we either have an edge, or we have a graph that is not an edge but must have a degree one vertex (if not, note that we would not be stuck). But from our previous discussion, we know that a connected bipartite Spartan graph on at least three vertices cannot have a degree one vertex. So in this situation, we have transformed the Spartan graph we started with into one that is not Spartan, and in particular, there is an attack that $mvc(\cdot)$ many guards will not be able to defend on the ``contracted'' graph. It turns out that this attack can be mimicked on the graph we started with, contradicting our assumption that the original graph is Spartan. Thus, our process must end at a single edge.


To conclude the proof, we show that if we take an elementary bipartite graph $G$ and ``inject'' an elementary graph into it by substituting an edge with an elementary graph, then the resulting graph is also elementary. This lets us run the contraction operations on $G$ in reverse, starting from the edge that we ended up with; and since an edge is elementary, we conclude that $G$ must have also been elementary (see also~\Cref{MC}). 

\begin{figure}
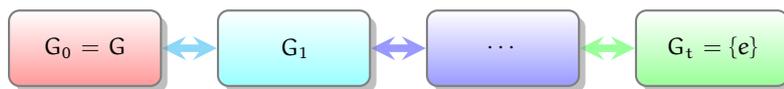

    \centering
    \smartdiagramset{arrow style=<->,back arrow disabled=true}
\smartdiagram[flow diagram:horizontal]{$G_0 = G$,$G_1$,$\cdots$,$G_t = \{e\}$}
    \caption{Proof idea: The sequence of forward contractions preserves ``Spartan-ness'' overall (i.e, $G_t$ is Spartan if $G_0$ is Spartan) while the backward expansions preserve ``elementary-ness''.}
    \label{fig:my_label-a}
\end{figure}

When $G$ is not connected, the arguments we made can be made independently of every connected component of $G$. This leads to our final characterization: that a biparitite graph $G$ is Spartan if and only if each of its components are elementary. Since it is straightforward to check if a bipartite graph $G = (A \cup B, E)$ is elementary (for instance, by checking if $G \setminus \{u,v\}$ admits a perfect matching for every $uv \in E$), our result also implies a polynomial-time recognition algorithm for bipartite Spartan graphs. We note that, in contrast to $mvc(G)$, determining $evc(G)$ is known to be NP-hard even for bipartite graphs.

In~\Cref{sec:wip}, we describe the proofs of the claims mentioned in the outline here in greater detail. Our notation largely follows conventions from Diestel~\cite{graphsbook}. 

\section{Spartan Bipartite Graphs are Essentially Elementary}
\label{sec:wip}

We begin with the (easy) forward implication, which was also observed by~\cite{AFI2015} --- we provide an explicit proof here in the interest of completeness.

\begin{lemma}\label{ele}
If a bipartite graph $G$ is essentially elementary, $evc(G)=|V(G)|/2=mvc(G)$, i.e., $G$ is Spartan.
\end{lemma}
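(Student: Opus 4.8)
The plan is to prove the two claimed equalities by establishing matching upper and lower bounds. The lower bound $evc(G) \geq mvc(G)$ holds for every graph, as noted in the introduction, so the real content is to exhibit a defender strategy witnessing $evc(G) \leq |V(G)|/2$ together with the observation that an essentially elementary bipartite graph satisfies $mvc(G) = |V(G)|/2$. First I would reduce to the connected case: since $G$ is essentially elementary, each connected component is elementary, and the eternal vertex cover game decomposes across components (guards in one component never need to move into another, and the attacker can only attack one edge per turn), so it suffices to show the claim for a single elementary component and then sum over components.

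So assume $G = (A \cup B, E)$ is connected and elementary. By \Cref{prop1}, the only minimum vertex covers are $A$ and $B$ and in particular $|A| = |B| = |V(G)|/2$, which immediately gives $mvc(G) = |V(G)|/2$. Since every edge of an elementary graph is allowed, $G$ has a perfect matching, and more strongly every edge lies in some perfect matching. The defender strategy is exactly the one sketched in the discussion after \Cref{prop1}: place $|A|$ guards on the vertices of $A$ (this is a vertex cover, so no edge is initially exposed). I would then argue by maintaining the invariant that after each of the defender's moves the set of occupied vertices is either $A$ or $B$. If the current configuration occupies $A$ and the attacker attacks an edge $e$, then because $e$ is allowed there is a perfect matching $M$ containing $e$; the defender moves every guard from $A$ to $B$ along the edges of $M$. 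In particular the guard on the $A$-endpoint of $e$ crosses $e$, defending the attack, and the new configuration is exactly $B$, which is again a vertex cover. The symmetric move handles the case where the configuration occupies $B$. Thus the defender defends any infinite sequence of attacks using $|A| = |V(G)|/2$ guards, giving $evc(G) \leq |V(G)|/2$.

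Combining the universal lower bound $evc(G) \geq mvc(G) = |V(G)|/2$ with the strategy's upper bound $evc(G) \leq |V(G)|/2$ yields $evc(G) = |V(G)|/2 = mvc(G)$ for each connected component, and summing over components gives the statement for the whole essentially elementary graph. I would also note explicitly that this argument works verbatim in both variants of the game (at most one guard per vertex, or several allowed), since the guards always form a perfect matching's worth of occupied vertices and hence never pile up.

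I do not expect a serious obstacle here, as this is the routine forward direction; the only points requiring a little care are the decomposition across components (ensuring the attacker cannot exploit inter-component interaction, which it cannot since components are edge-disjoint and vertex-disjoint) and the invariant maintenance, namely verifying that after a matching move the occupied set is genuinely all of $B$ (respectively $A$) and therefore a vertex cover. Both follow directly from $M$ being a \emph{perfect} matching of the connected elementary component, so the main work is simply stating the strategy cleanly and invoking \Cref{prop1} for the equality $|A| = |B| = |V(G)|/2$.
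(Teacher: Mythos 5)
Your proof is correct and follows essentially the same route as the paper: invoke \Cref{prop1} to get $mvc(G)=|V(G)|/2$ for each elementary component, then defend with the alternating strategy that moves all guards between $A$ and $B$ along a perfect matching containing the attacked edge, and sum over components. The extra care you take (the invariant that the occupied set is exactly $A$ or $B$, and the component decomposition) is implicit in the paper's argument, so there is no substantive difference.
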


\begin{proof}
We first make the argument for the case when $G$ is connected. Since $G$ is elementary, $mvc(G)=|V(G)|/2$ by the previous proposition. Since $evc(G)\geq mvc(G)$ for any graph $G$, we just need to show that the defender has a winning strategy when the number of guards is $mvc(G)$. Initially place all the guards on one side, wlog say $A$. Suppose the attacker attacks an edge $e$, since $e$ is allowed there exists a perfect matching containing $e$. Use this matching to transfer all the guards to $B$ to ensure that one guard has moved along $e$. Similarly transfer all the guards to $A$ using the perfect matching containing the attacked edge in the next attack. Since the defender can always transfer all the guards to the other side ensuring the movement of one guard across the attacked edge, any sequence of attacks can be defended and thus the defender wins.

Now, if $G$ has more than one connected component, the argument above can be repeated for each connected component. If the components have sizes $|C_1|, \ldots, |C_\ell|$ then the overall size of the eternal vertex cover will be: $\frac{1}{2}\sum_{i=1}^\ell |C_i| = |V(G)|/2$. 
\end{proof}

Now, let $G$ be a Spartan biparitite graph with connected components $C_1, \ldots, C_\ell$. Note that: 

$$\mathsf{evc}(G) = \sum_{i=1}^{\ell}\mathsf{evc}(G[C_i]) \geqslant \sum_{i=1}^{\ell}\mathsf{mvc}(G[C_i]) = \mathsf{mvc}(G).$$


Therefore, if $G$ is Spartan then $G[C_i]$ is Spartan for all $1 \leqslant i \leqslant [\ell]$. Indeed, if not, then there exists a component $C_i$ for which $\mathsf{evc}(G[C_i]) > \mathsf{mvc}(G[C_i])$. But combined with the inequality above, this will imply that $\mathsf{evc}(G) > \mathsf{mvc}(G)$, contradicting our assumption that $G$ is Spartan. Therefore, we may assume without loss of generality that $G$ is connected, and our goal amounts to showing the following.


\begin{lemma}\label{conv}
If $G$ is a connected bipartite graph that is Spartan, then it is elementary. 
\end{lemma}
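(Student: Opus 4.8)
\textbf{Proof proposal for \Cref{conv}.}
The plan is to prove the contrapositive in spirit by the contraction strategy sketched in the introduction: starting from a connected Spartan bipartite graph $G = (A \cup B, E)$, I build a sequence $G_0 = G, G_1, \ldots, G_t$ of connected bipartite graphs where each $G_{i+1}$ is obtained from $G_i$ by contracting an elementary induced subgraph $G_i[S]$ (whose vertex set $S$ consists of the endpoints of a subset $M' \subseteq M$ of a fixed perfect matching $M$) down to a single matching edge. The two engines of the argument run in opposite directions along this sequence. Going forward, I must show that each contraction preserves Spartan-ness, i.e.\ $G_{i+1}$ is Spartan whenever $G_i$ is; going backward, I must show that each expansion preserves elementary-ness, i.e.\ $G_i$ is elementary whenever $G_{i+1}$ is.

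First I would establish that the process can always make progress as long as $G_i$ is not a single edge. Since $G$ is Spartan it has a perfect matching and, by the observation recorded in the introduction, no degree-one vertices; fixing $M$ and starting from any matching edge $ab$, the maximal $M$-alternating walk yields an $M$-alternating cycle $C$ (as in \Cref{NAC}). Any bipartite graph obtained by adding edges to a cycle is elementary (\Cref{cycle}), so $G_i[S]$ for $S = V(C)$ together with any chords present is elementary and can be contracted. The forward direction then invokes \Cref{MCP}: contracting an elementary $G_i[S]$ to one edge keeps the graph Spartan. The crux of the forward step is the simulation argument. If some $G_i$ were not Spartan, there would be an attack strategy that $mvc(\cdot)$ guards cannot defend on $G_i$; I would argue that this attack can be pulled back through the contractions to an indefensible attack on $G_0 = G$, using that a configuration of guards on the contracted graph lifts to a configuration on the original graph by spreading the single guard across a perfect matching of the contracted elementary piece. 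This pullback, and verifying that the guard-movement constraints are respected step by step, is the part I expect to be the main obstacle: it requires carefully matching the defender's legal responses in $G_i$ with legal responses in $G_{i-1}$, and controlling the total guard count throughout.

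Next I would argue that the process terminates at a single edge and not before. When no further contraction is possible, $G_t$ is either a single edge or a connected bipartite graph on at least three vertices. In the latter case $G_t$ must contain a degree-one vertex (otherwise the alternating-cycle construction above would furnish another contractible cycle, contradicting maximality). But a connected bipartite Spartan graph on at least three vertices has no degree-one vertex (as shown in the introduction, cf.\ \Cref{prop:nodeg1}). Since $G_t$ is Spartan by the forward direction, this is a contradiction unless $G_t$ is a single edge. Hence $t$ is well defined and $G_t = \{e\}$, which is trivially elementary.

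Finally I would run the backward direction. Using the fact that injecting an elementary graph in place of an edge of an elementary bipartite graph yields an elementary graph (\Cref{MC}), I expand $G_t$ back to $G_0$ one step at a time: since $G_t$ is elementary and each expansion $G_{i+1} \rightsquigarrow G_i$ substitutes a single edge of the elementary graph $G_{i+1}$ with the elementary piece $G_i[S]$, every $G_i$ is elementary, and in particular $G_0 = G$ is elementary, as desired. The main work is thus concentrated in (i) the simulation lemma underlying the forward step and (ii) the structural claims \Cref{NAC,cycle,MCP,MC}; granting those, the induction along the contraction sequence closes the argument.
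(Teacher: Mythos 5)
Your proposal is correct and follows essentially the same route as the paper: forward contractions preserve Spartan-ness (\Cref{MCP}), the maximal contraction graph must be a single edge (\Cref{stru} together with \Cref{prop:nodeg1} applied to the Spartan terminal graph), and running the expansions in reverse preserves elementariness (\Cref{MC}), so $G$ is elementary. The one remark worth making is that the step-by-step attack ``pullback'' you flag as the main obstacle is not actually needed: the paper's proof of \Cref{MCP} treats the whole sequence of contractions at once, using the label sets $A_1, B_1$ of the degree-one vertex of the terminal graph and the elementariness of the contracted pieces (via \Cref{MC}) to construct an explicit attack of at most three moves on the original $G$, rather than simulating defenses across each contraction step.
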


We first argue that any connected Spartan bipartite graph must have a perfect matching. 

\begin{proposition}
\label{prop:spartanimpliespm}
If $G$ is a Spartan connected bipartite graph, it admits a perfect matching.
\end{proposition}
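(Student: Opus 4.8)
The plan is to establish the contrapositive: if the connected bipartite graph $G = (A \cup B, E)$ admits no perfect matching, then $evc(G) > mvc(G)$, so that $G$ is not Spartan. First I would fix a maximum matching $M$; by K\"onig's theorem $|M| = mvc(G)$, and since $G$ has no perfect matching some vertex is left unsaturated by $M$, which we may assume (after possibly swapping the roles of $A$ and $B$) to be a vertex $b \in B$. As $G$ is connected and non-trivial, $b$ has a neighbour $a \in A$. The goal is then to exhibit a single attack, namely on the edge $ab$, that $mvc(G)$ guards cannot defend; this forces $evc(G) \geq mvc(G) + 1$.

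So suppose toward a contradiction that the defender wins using exactly $mvc(G) = |M|$ guards. Then after each of her moves the guarded vertices must form a vertex cover of size $|M|$, i.e.\ a minimum vertex cover. The first key step is the observation that no such configuration can place a guard on $b$. Indeed, $b$ is unsaturated by $M$, so a guard sitting on $b$ covers none of the $|M|$ pairwise disjoint edges of $M$; this leaves only $|M|-1$ guards to cover these $|M|$ disjoint edges, and since each guard lies on at most one edge of $M$, at least one edge of $M$ would be left uncovered, contradicting that the configuration is a vertex cover. Hence in every configuration the defender can reach, $b$ is unguarded, and consequently $a$ must be guarded in order to cover the edge $ab$.

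The second key step exploits this to produce the fatal move. Take any configuration reached by the defender and attack $ab$. By the rules some guard must traverse $ab$; since $b$ carries no guard, no guard can cross from the $b$-side and swapping is unavailable, so the only legal response is to move the guard on $a$ to $b$. After this move a guard occupies $b$, and the resulting position must again be a minimum vertex cover --- which we just showed is impossible. Therefore the attack on $ab$ cannot be defended with $|M|$ guards, so $evc(G) > mvc(G)$, as required.

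The main obstacle, and the step deserving the most care, is the counting argument ruling out a guard on $b$: one must argue cleanly that a guard placed on an $M$-unsaturated vertex is effectively ``wasted'' with respect to the disjoint edges of $M$, so that the remaining $|M|-1$ guards cannot simultaneously cover all $|M|$ of them. Once this pigeonhole observation is in place, the remainder is routine, since the emptiness of $b$ leaves the defender no response to the attack on $ab$ other than $a \to b$, yielding the contradiction immediately.
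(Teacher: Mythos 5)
Your proposal is correct and follows essentially the same route as the paper: take the contrapositive, fix a maximum matching $M$ with $|M| = mvc(G)$ by K\"onig's theorem, locate an $M$-unsaturated vertex $b$, force a guard onto its neighbour $a$, attack $ab$, and then use the pigeonhole fact that the remaining guards cannot cover the $|M|$ disjoint edges of $M$. The only cosmetic difference is packaging: the paper phrases the contradiction as a second attack on the uncovered matching edge, while you phrase it as the post-defense configuration failing to be a minimum vertex cover (since no minimum vertex cover can contain $b$) --- these are the same argument.
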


\begin{proof}
    Let $G$ be a connected Spartan bipartite graph with bipartition $(A,B)$ where $A = \{a_1,\ldots,a_p\}$ and $B = \{b_1,\ldots,b_q\}$ (see also~\Cref{fig:spartanPM}). For the sake of contradiction, assume that $G$ does \emph{not} admit a perfect matching, i.e., without loss of generality let $q>k$. Let $S$ be a minimum vertex cover in $G$ of size $k$ such that the guards are placed on $S$ in the initial configuration. Let $M$ be any maximum matching. Without loss of generality, let the edges of $M$ be given by:
    $$\{e_1 = (a_1,b_1), \ldots, e_i = (a_i,b_i), e_{i+1} = (a_{i+1},b_{i+1}), \ldots, e_k = (a_k,b_k)\},$$
    
    where $S \cap A = \{a_1, \ldots, a_i\}$ and $S \cap B = \{b_{i+1}, \ldots, b_k\}$. 
    
    Since $M$ is not a perfect matching, $|V(G)|> 2k$ i.e. there exists a vertex in $G$ which is not an endpoint of any edge in $M$. Without loss of generality, let this vertex be $b_q$. Since $b_q$ is not an isolated vertex, it must have a neighbour on the $A$ side. Since $S$ is a vertex cover, this neighbour must lie in $S\cap A$. Without loss of generality, let this neighbour be $a_1$. If the attacker attacks $a_1b_q$, the guard on $a_1$ must come to $b_q$ and cannot move further. Now there are $k-1$ guards on the vertices of $M$ to protect $k$ matching edges. So no matter how these guards arrange themselves, at least one edge of $M$ must be vulnerable before the next attack. The attacker attacks this edge in the next move and wins thus contradicting the fact that $G$ is Spartan. Therefore, the graph $G$ must admit a perfect matching.     
\end{proof}

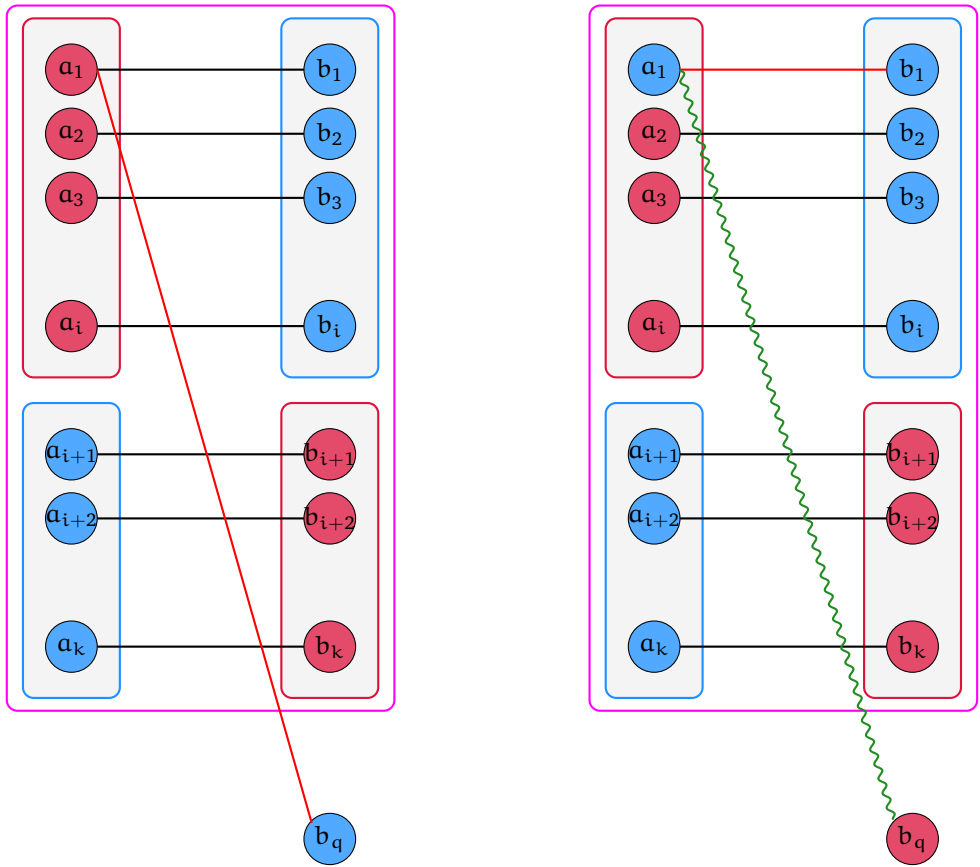
\begin{figure}[h]
    \begin{subfigure}[t]{0.45\textwidth}
    
        \centering
        
        \begin{tikzpicture}[scale=0.85]
        {
        
        \draw [Fuchsia,thick,rounded corners] (-1,2) rectangle (5,13);
        \draw [Crimson,thick,rounded corners,fill=LightGray!25] (-0.75,7.2) rectangle (0.75,12.8);
        \draw [DodgerBlue,thick,rounded corners,fill=LightGray!25] (3.25,7.2) rectangle (4.75,12.8);
        \draw [DodgerBlue,thick,rounded corners,fill=LightGray!25] (-0.75,2.2) rectangle (0.75,6.8);
        \draw [Crimson,thick,rounded corners,fill=LightGray!25] (3.25,2.2) rectangle (4.75,6.8);
        \draw[fill=Crimson!77] (0, 12) circle (0.4cm);
        \node at (0,12){$a_1$};
        \draw[fill=Crimson!77] (0, 11) circle (0.4cm);
        \node at (0,11){$a_2$};
        \draw[fill=Crimson!77] (0, 10) circle (0.4cm);
        \node at (0,10){$a_3$};
        \draw[fill=Crimson!77] (0, 8) circle (0.4cm);
        \node at (0,8){$a_i$};
        \draw[fill=DodgerBlue!77] (0, 6) circle (0.4cm);
        \node at (0,6){$a_{i+1}$};
        \draw[fill=DodgerBlue!77] (0, 5) circle (0.4cm);
        \node at (0,5){$a_{i+2}$};
        \draw[fill=DodgerBlue!77] (0, 3) circle (0.4cm);
        \node at (0,3){$a_k$};
        \draw[fill=DodgerBlue!77] (4, 12) circle (0.4cm);
        \node at (4,12){$b_1$};
        \draw[fill=DodgerBlue!77] (4, 11) circle (0.4cm);
        \node at (4,11){$b_2$};
        \draw[fill=DodgerBlue!77] (4, 10) circle (0.4cm);
        \node at (4,10){$b_3$};
        \draw[fill=DodgerBlue!77] (4, 8) circle (0.4cm);
        \node at (4,8){$b_i$};
        \draw[fill=Crimson!77] (4, 6) circle (0.4cm);
        \node at (4,6){$b_{i+1}$};
        \draw[fill=Crimson!77] (4, 5) circle (0.4cm);
        \node at (4,5){$b_{i+2}$};
        \draw[fill=Crimson!77] (4, 3) circle (0.4cm);
        \node at (4,3){$b_k$};
         \draw[fill=DodgerBlue!77] (4, 0) circle (0.4cm);
        \node at (4,0){$b_q$};
     \foreach \x in {3,5,6,8,10,11,12}
     \draw[thick] (0.4,\x) -- (3.6,\x);
     
     \draw[thick,Red] (0.4,12) -- (3.72,0.25);

        }
        \end{tikzpicture}
        
        \caption{A graph which does not have a perfect matching and the red edge is attacked}
        \label{fig:my_label-g}
    \end{subfigure}
    ~~~~~~~~~~~~
    \begin{subfigure}[t]{0.45\textwidth}
    
        \centering
        
        \begin{tikzpicture}[scale=0.85]
        {
         \tikzset{decoration={snake,amplitude=.4mm,segment length=2mm, post length=0mm,pre length=0mm}}
        
         \draw [Fuchsia,thick,rounded corners] (-1,2) rectangle (5,13);
        \draw [Crimson,thick,rounded corners,fill=LightGray!25] (-0.75,7.2) rectangle (0.75,12.8);
        \draw [DodgerBlue,thick,rounded corners,fill=LightGray!25] (3.25,7.2) rectangle (4.75,12.8);
        \draw [DodgerBlue,thick,rounded corners,fill=LightGray!25] (-0.75,2.2) rectangle (0.75,6.8);
        \draw [Crimson,thick,rounded corners,fill=LightGray!25] (3.25,2.2) rectangle (4.75,6.8);
        
        \draw[fill=DodgerBlue!77] (0, 12) circle (0.4cm);
        \node at (0,12){$a_1$};
        \draw[fill=Crimson!77] (0, 11) circle (0.4cm);
        \node at (0,11){$a_2$};
        \draw[fill=Crimson!77] (0, 10) circle (0.4cm);
        \node at (0,10){$a_3$};
        \draw[fill=Crimson!77] (0, 8) circle (0.4cm);
        \node at (0,8){$a_i$};
        \draw[fill=DodgerBlue!77] (0, 6) circle (0.4cm);
        \node at (0,6){$a_{i+1}$};
        \draw[fill=DodgerBlue!77] (0, 5) circle (0.4cm);
        \node at (0,5){$a_{i+2}$};
        \draw[fill=DodgerBlue!77] (0, 3) circle (0.4cm);
        \node at (0,3){$a_k$};
        \draw[fill=DodgerBlue!77] (4, 12) circle (0.4cm);
        \node at (4,12){$b_1$};
        \draw[fill=DodgerBlue!77] (4, 11) circle (0.4cm);
        \node at (4,11){$b_2$};
        \draw[fill=DodgerBlue!77] (4, 10) circle (0.4cm);
        \node at (4,10){$b_3$};
        \draw[fill=DodgerBlue!77] (4, 8) circle (0.4cm);
        \node at (4,8){$b_i$};
        \draw[fill=Crimson!77] (4, 6) circle (0.4cm);
        \node at (4,6){$b_{i+1}$};
        \draw[fill=Crimson!77] (4, 5) circle (0.4cm);
        \node at (4,5){$b_{i+2}$};
        \draw[fill=Crimson!77] (4, 3) circle (0.4cm);
        \node at (4,3){$b_k$};
         \draw[fill=Crimson!77] (4, 0) circle (0.4cm);
        \node at (4,0){$b_q$};
     \foreach \x in {3,5,6,8,10,11}
     \draw[thick] (0.4,\x) -- (3.6,\x);
     \draw[thick,Red] (0.4,12) -- (3.6,12);
     \draw[thick,decorate,ForestGreen] (0.4,12) -- (3.7,0.3);

        }
        \end{tikzpicture}

        \caption{After defending the attack, there are only $k-1$ guards to protect $k$ matching edges. So an edge will remain vulnerable.}
        \label{fig:my_label-f}
    \end{subfigure}
    \caption{Demonstrating that Spartan connected bipartite graphs must have perfect matching towards a proof of~\Cref{prop:spartanimpliespm}.}
    \label{fig:spartanPM}
    \end{figure}

So we now have that connected Spartan bipartite graphs $G$ have at least one perfect matching. From now on we denote the bipartitions of $G$ by $A=\{a_1,a_2,\ldots a_n\}$ and $B=\{b_1,b_2,\ldots b_n\}$ where each $a_i$ and $b_i$ have an edge between them so that we have a perfect matching $M=\bigcup_{i=1}^{n}\{a_i b_i\}$. Since $G$ is connected, these bipartitions are unique upto permutations. We now show that connected bipartite graphs that are Spartan do not have degree one vertices. This follows from known results (for example, as shown by~\cite{BCFPRW2019}), however, our argument is more direct and we make it explicit to make our presentation self-contained.

\begin{proposition}[No Degree One Vertex]\label{l1}
\label{prop:nodeg1}
Let $G$ be a connected bipartite graph with $|V(G)|>2$ and with a perfect matching. If $evc(G)=mvc(G)$, then $G$ has no degree one vertex.
\end{proposition}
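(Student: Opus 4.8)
The plan is to argue by contradiction. Suppose $G = (A \cup B, E)$ is a connected bipartite graph with $|V(G)| > 2$, a perfect matching $M$, and $evc(G) = mvc(G)$, yet $G$ has a degree-one vertex; without loss of generality say $a \in A$ with unique neighbour $b \in B$. Since $a$ has degree one, $M$ is forced to use the edge $ab$, so $mvc(G) = |A| = |B| =: n$ and the defender plays with exactly $n$ guards. Because $|V(G)| > 2$ and $G$ is connected, $b$ has a second neighbour $a' \in A$; this is the vertex the attacker will exploit.

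The engine of the proof is an invariant I would record first: since the defender has exactly $n$ guards and $G$ has a perfect matching consisting of $n$ pairwise disjoint edges, every configuration that survives an attack---being a vertex cover of size $n$---must place exactly one guard on each edge of $M$. In particular, no reachable configuration can have guards on both endpoints of a matching edge, for two guards on one matching edge would leave only $n-2$ guards to cover the remaining $n-1$ disjoint edges. I would emphasise that this counting argument holds identically in the single- and multi-occupancy variants. Applied to the edge $ab$, it says that at all times exactly one of $a, b$ is occupied.

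With this in hand I would describe the attacker's strategy. First, force a guard onto $a$: if the initial configuration already has a guard on $a$, proceed directly; otherwise the invariant places the guard on $b$, and attacking $ab$ forces that guard to cross to $a$ (its only neighbour), again leaving a guard on $a$ and, by the invariant, none on $b$. Now attack $ba'$. Since $b$ is empty, the only way to move a guard across $ba'$ is to bring the guard from $a'$ (which is present, as $ba'$ was covered) onto $b$; meanwhile the guard stranded on $a$ has nowhere to go but $b$. Hence any legal response to this attack spends two guards on the edge $ab$, which by the invariant forces some other matching edge to be left uncovered. The attacker then strikes that edge and wins, contradicting that $G$ is Spartan.

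The main obstacle---really the only delicate point---is justifying that the response to the attack on $ba'$ cannot avoid committing two guards to $ab$; this is exactly where the disjoint-matching-edge count and the uniform treatment of both occupancy variants do the work. I would also take care that the preliminary step forcing a guard onto $a$ is handled cleanly, so that the two cases for the initial configuration are genuinely exhaustive and each funnels into the same winning attack on $ba'$.
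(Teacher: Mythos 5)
Your proposal is correct and follows essentially the same strategy as the paper's proof: force a guard onto the degree-one vertex $a$ (splitting on whether $a$ or $b$ is initially occupied), then attack $ba'$ so that the matching edge $ab$ ends up carrying two guards, whence the one-guard-per-matching-edge counting invariant leaves some other matching edge vulnerable. Your explicit statement of the invariant and the remark that it covers both occupancy variants is a slightly cleaner packaging of the same counting argument the paper uses inline.
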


\begin{proof}
Let $G$ be a connected bipartite graph with $|V(G)|>2$ and with a perfect matching. Assume that $evc(G)=mvc(G)$, and that $G$ has a degree $1$ vertex. We will show that this leads to a contradiction (c.f~\Cref{fig-nodegone}). Suppose $G$ has a degree $1$ vertex on the $A$ side, say $a_1$. Since the graph is connected, $b_1$ has an edge with some other vertex say $a_2$. If $b_1$ has a guard in the initial configuration, then attack $a_1b_1$. This will cause the guard on $b_1$ to move to $a_1$. (It is not possible that $a_1$ already had a guard because $a_1 b_1$ belongs to a perfect matching $M$ and we have only $n$ guards so some other edge in $M$ will be left with no guard). After this no other guard can come to $b_1$ for a similar reason. So without loss of generality we can assume that $a_1$ has a guard and $b_1$ has no guard in the initial configuration.

Since $b_1$ has no guard, $a_2$ must have a guard to protect $a_2 b_1$. Now attack the edge $a_2 b_1$. The guard on $a_2$ must move to $b_1$. The guard on $a_1$ cannot move anywhere. So we have an overprotected edge $a_1b_1$ which belongs to a perfect matching which means that some other edge in the matching must be vulnerable. So $n$ guards are not sufficient to protect the graph and hence $evc(G)\neq mvc(G)$.
\end{proof}

From \Cref{ele} and \Cref{l1}, we have the following.

\begin{corollary}\label{C1}
Let $G$ be a connected bipartite graph with more than two vertices. If $G$ is elementary, then $G$ cannot have a degree one vertex.
\end{corollary}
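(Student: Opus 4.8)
The plan is to chain the two preceding results, since the corollary is essentially their composition. First I would observe that any elementary graph admits a perfect matching: by definition \emph{elementary} means connected with every edge allowed, and an allowed edge lies in some perfect matching, so at least one perfect matching exists. This discharges the structural hypothesis that \Cref{l1} will need downstream.

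Next I would invoke \Cref{ele}. Since $G$ is connected and elementary, it is in particular essentially elementary (it is a single elementary connected component), and hence \Cref{ele} yields $evc(G) = |V(G)|/2 = mvc(G)$; that is, $G$ is Spartan.

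Finally I would apply \Cref{l1}. All of its hypotheses are now in hand: $G$ is connected bipartite, we are given $|V(G)| > 2$, $G$ has a perfect matching (from the first observation), and $evc(G) = mvc(G)$ (from the second step). The conclusion of \Cref{l1} is precisely that $G$ has no degree one vertex, which is exactly what we wish to prove.

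I do not anticipate a genuine obstacle, as this is a direct composition of \Cref{ele} and \Cref{l1}; the only point requiring care is confirming that each hypothesis of the two invoked statements is actually met. In particular, one must notice that ``elementary'' simultaneously supplies ``essentially elementary'' (so that \Cref{ele} applies) and ``has a perfect matching'' (so that \Cref{l1} applies), with the size condition $|V(G)| > 2$ carried over unchanged from the hypothesis of the corollary.
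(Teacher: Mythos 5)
Your proof is correct and follows exactly the paper's route: the paper derives \Cref{C1} by composing \Cref{ele} (elementary $\Rightarrow$ Spartan) with \Cref{l1} (connected bipartite, $|V(G)|>2$, perfect matching, Spartan $\Rightarrow$ no degree-one vertex). Your extra care in verifying that an elementary graph has a perfect matching (so that \Cref{l1} applies) is a hypothesis check the paper leaves implicit, but it is the same argument.
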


Next, we show that a bipartite graph with a cycle that visits every vertex is elementary.






\begin{proposition}[Adding edges to cycles]\label{cycle}
Consider a connected bipartite graph $G = (A \cup B, E)$ which is a cycle $a_1b_1a_2b_2\ldots a_kb_ka_1$ (where $k>1$). The graph $G'$ formed after adding any number of edges between $A$ and $B$ (preserving bipartiteness) will be elementary.
\end{proposition}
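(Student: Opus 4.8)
The plan is to verify the definition of elementary directly: show that $G'$ is connected and that every edge of $G'$ is \emph{allowed}, i.e. lies in some perfect matching. Connectedness is immediate, since $G'$ contains the spanning cycle $C = a_1b_1\cdots a_kb_ka_1$ as a subgraph and adding edges cannot disconnect a graph. The cycle itself also supplies a convenient pair of perfect matchings of the whole vertex set, namely $M_1 = \{a_ib_i : 1 \le i \le k\}$ and $M_2 = \{b_ia_{i+1} : 1 \le i \le k\}$ (indices taken modulo $k$, so $b_ka_{k+1}=b_ka_1$). Between them, $M_1$ and $M_2$ use every edge of $C$, so each cycle edge is allowed.

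The substance is to show that every newly added edge (chord) is allowed as well. First I would fix a linear labelling of the cycle, writing its vertices in cyclic order as $v_1, v_2, \ldots, v_{2k}$ with $a_i = v_{2i-1}$ and $b_i = v_{2i}$; thus the vertices of $A$ occupy the odd positions and those of $B$ the even positions. A chord preserving bipartiteness is an edge $e = a_ib_j = v_pv_q$ with $p$ odd and $q$ even, so $p$ and $q$ have opposite parity. Deleting the two endpoints $v_p$ and $v_q$ from $C$ splits the remaining $2k-2$ vertices into two arcs, each an induced subpath of $C$.

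The key observation is a parity count: because $p$ and $q$ differ in parity, each arc contains an even number of vertices. Concretely, one arc has $|q-p|-1$ vertices and the other has $2k-1-|q-p|$, and both are even since $|q-p|$ is odd. A path on an even number of vertices has a perfect matching using only its own (cycle) edges, so I can perfectly match each arc internally and then adjoin the chord $e$ to obtain a perfect matching of the full vertex set that uses $e$; note that the two arc-matchings and the chord are vertex-disjoint, because the arcs partition the non-endpoint vertices while the chord meets only $v_p$ and $v_q$. Hence $e$ is allowed. Since cycle edges and chords are both allowed and $G'$ is connected, $G'$ is elementary; equivalently, by \Cref{prop1}, $A$ and $B$ are its only minimum vertex covers.

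I expect the only delicate point to be the parity bookkeeping that guarantees both arcs are matchable; everything else (connectivity, the two matchings of the cycle, and assembling the arc-matchings with the chord into a global matching) is routine. In particular I should be careful to state the arc lengths with respect to the cyclic order and to confirm that opposite parity of the two endpoints is exactly what makes both arcs even, which is precisely the manifestation of the bipartite (alternating) structure of the cycle.
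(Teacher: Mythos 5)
Your proof is correct and follows essentially the same route as the paper: both show that every cycle edge lies in one of the two natural perfect matchings of the spanning cycle, and both certify each chord $a_ib_j$ by combining it with cycle-edge matchings of the two arcs that remain after deleting its endpoints. The only real difference is presentational --- the paper writes out the two arc matchings explicitly via a case analysis ($i>j+1$ versus $i<j$), whereas you establish their existence with the parity count (both arcs have an even number of vertices since $|q-p|$ is odd), which cleanly subsumes the paper's cases.
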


\begin{proof}
 Let $G$ be a connected bipartite graph $G$ which is a cycle $a_1b_1a_2b_2\ldots a_kb_ka_1$, where $k>1$ (see also~\Cref{fig-cycleedges}). Here the two sides of the bipartition are $A=\{a_1,a_2,\ldots,a_k\}$ and $B=\{b_1,b_2,\ldots,b_k\}$. 
 
 The graph $G$ has two perfect matchings which are given by $M=\{a_1b_1,a_2b_2,\ldots,a_k b_k\}$ and $M'=\{b_1a_2,b_2a_3,\ldots,b_{k-1}a_k,b_ka_1\}$. It is easily seen that any edge of $G$ lies in one of these two perfect matchings. 
 
 Suppose we add some edges from $A$ to $B$. Notice that the vertex set of the new graph $G'$ is same as the vertex set of $G$. In order to prove that $G'$ is elementary, it is sufficient to show that each of the newly added edges belongs to a perfect matching of $G'$.
 
 Consider a newly added edge $a_ib_j$. This clearly preserves bipartiteness. Here, $i\neq j,j+1$ because the edges $a_ib_i$ and $a_ib_{i-1}$ were already in $G$. 
 
 First, let $i>j+1$. Consider the matching: 
 $$M_1=\{a_ib_j,b_ia_{i+1},b_{i+1}a_{i+2},\ldots,b_{k-1}a_k,b_ka_1, b_1a_2,\ldots,b_{j-1}a_{j}, a_{j+1}b_{j+1}, a_{j+2}b_{j+2},\ldots,a_{i-1}b_{i-1}\}.$$ 
 
 Clearly $M_1$ is a perfect matching containing $a_ib_j$. 
 
 Now let $i<j$. Let $N_1=\{a_ib_j, b_ia_{i+1},b_{i+1}a_{i+2},\ldots,b_{j-1}a_j\}; N_2=\{a_1b_1,\ldots,a_{i-1}b_{i-1}\};$ and $N_3=\{a_{j+1}b_{j+1},\ldots,a_kb_k\}.$ 
 
 Consider the matching $M_2$ given by $N_1\cup N_2 \cup N_3$. Clearly $M_2$ is a perfect matching containing $a_ib_j$.
 
 This concludes our argument showing that $G'$ is elementary. 
 \end{proof}

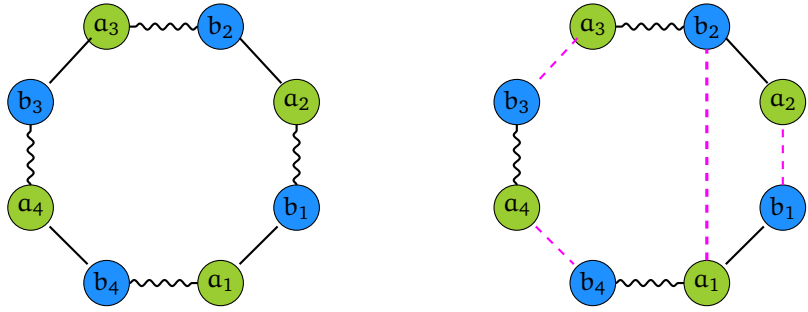
\begin{figure}[H]
\begin{subfigure}[t]{0.45\textwidth}
    \centering
    \begin{tikzpicture}
    {\tikzset{decoration={snake,amplitude=.4mm,segment length=2mm, post length=0mm,pre length=0mm}}
    
    \draw[fill=YellowGreen] (6, 3) circle (0.3cm);
    \node at (6,3){$a_1$};
     \draw[fill=DodgerBlue] (7, 4) circle (0.3cm);
    \node at (7,4){$b_1$};
     \draw[fill=YellowGreen] (7, 5.4) circle (0.3cm);
    \node at (7,5.4){$a_2$};
    \draw[fill=DodgerBlue] (6, 6.4) circle (0.3cm);
    \node at (6,6.4){$b_2$};
     \draw[fill=YellowGreen] (4.5, 6.4) circle (0.3cm);
    \node at (4.5,6.4){$a_3$};
    \draw[fill=DodgerBlue] (3.5, 5.4) circle (0.3cm);
    \node at (3.5,5.4){$b_3$};
     \draw[fill=YellowGreen] (3.5, 4) circle (0.3cm);
    \node at (3.5,4){$a_4$};
     \draw[fill=DodgerBlue] (4.5, 3) circle (0.3cm);
    \node at (4.5,3){$b_4$};
    \draw[thick] (6.25, 3.25)--(6.75, 3.75);
    \draw[thick] (6.8, 5.65)--(6.25, 6.25);
    \draw[thick] (4.3, 6.25)--(3.75, 5.65);
    \draw[thick] (3.75, 3.75)--(4.25, 3.25);
    \draw[thick,decorate] (4.8,3)--(5.7,3);
    \draw[thick,decorate] (7,4.3)--(7,5.1);
    \draw[thick,decorate] (5.7,6.4)--(4.8,6.4);
    \draw[thick,decorate] (3.5,4.3)--(3.5,5.1);
    }
    \end{tikzpicture}
    \caption{This figure shows a cycle with $k = 4$.}
    \label{fig:my_label-e}
\end{subfigure}
\begin{subfigure}[t]{0.45\textwidth}
    \centering
    \begin{tikzpicture}
    {\tikzset{decoration={snake,amplitude=.4mm,segment length=2mm, post length=0mm,pre length=0mm}}
    
    \draw[fill=YellowGreen] (6, 3) circle (0.3cm);
    \node at (6,3){$a_1$};
     \draw[fill=DodgerBlue] (7, 4) circle (0.3cm);
    \node at (7,4){$b_1$};
     \draw[fill=YellowGreen] (7, 5.4) circle (0.3cm);
    \node at (7,5.4){$a_2$};
    \draw[fill=DodgerBlue] (6, 6.4) circle (0.3cm);
    \node at (6,6.4){$b_2$};
     \draw[fill=YellowGreen] (4.5, 6.4) circle (0.3cm);
    \node at (4.5,6.4){$a_3$};
    \draw[fill=DodgerBlue] (3.5, 5.4) circle (0.3cm);
    \node at (3.5,5.4){$b_3$};
     \draw[fill=YellowGreen] (3.5, 4) circle (0.3cm);
    \node at (3.5,4){$a_4$};
     \draw[fill=DodgerBlue] (4.5, 3) circle (0.3cm);
    \node at (4.5,3){$b_4$};
    \draw[thick] (6.25, 3.25)--(6.75, 3.75);
    \draw[thick] (6.8, 5.65)--(6.25, 6.25);
    \draw[thick,dashed, Fuchsia] (4.3, 6.25)--(3.75, 5.65);
    \draw[thick,dashed,Fuchsia] (3.75, 3.75)--(4.25, 3.25);
    \draw[thick,decorate] (4.8,3)--(5.7,3);
    \draw[thick,dashed,Fuchsia] (7,4.3)--(7,5.1);
    \draw[thick,decorate] (5.7,6.4)--(4.8,6.4);
    \draw[thick,decorate] (3.5,4.3)--(3.5,5.1);
    \draw[very thick,dashed,Fuchsia] (6,3.3)--(6,6.1);
    }
    \end{tikzpicture}
    \caption{Each newly added edge is allowed: Suppose $a_1b_2$ is the new edge, it can be combined with the existing edges to form the perfect matching denoted by the dashed purple lines.}
    \label{fig:my_label-h}
\end{subfigure}
\caption{An example of a bipartite graph with a spanning cycle.}
\label{fig-cycleedges}
\end{figure}

We now introduce the terminology ``special subset'' to indicate that we are working with a subset of endpoints of some edges in a perfect matching.

\begin{definition}[Special Subset]\label{SS}
Let $G$ be a connected bipartite graph with a perfect matching $M=\{a_1b_1, a_2b_2,\ldots,a_nb_n\}$. Consider: $S=\{a_{i_1}, b_{i_1},a_{i_2},b_{i_2},\ldots,a_{i_k},b_{i_k}\}$ for some distinct $i_1,i_2,\ldots,i_k\in[n]$ and $2\leq k\leq n$. Then $S$ is said to be a \textbf{special subset} of $V(G)$.
\end{definition}

A ``special induced subgraph'' is a subgraph induced by a special subset.

\begin{definition}[Special Induced Subgraph]\label{SIG}
Let $G$ be a connected bipartite graph with a perfect matching and $S$ be a special subset of $V(G)$. The subgraph $G[S]$ induced by $S$ is called a \textbf{special induced subgraph} of $G$.
\end{definition}

We now note that edges allowed in special induced subgraphs of a graph $G$ are also allowed in $G$.

\begin{proposition}[Allowed Edge]\label{AE}
Let $G$ be a connected bipartite graph with a perfect matching $M$. Let $S$ be a special subset of $G$ and $H=G[S]$ be the corresponding special induced subgraph. An edge which is allowed in $H$ is also allowed in $G$.
\end{proposition}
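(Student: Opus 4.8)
The plan is to exploit the defining feature of a special subset: by \Cref{SS}, the set $S$ is precisely the collection of endpoints of a sub-family $M' = \{a_{i_1}b_{i_1}, a_{i_2}b_{i_2}, \ldots, a_{i_k}b_{i_k}\}$ of the perfect matching $M$. This is exactly what makes $M$ split cleanly as the disjoint union $M = M' \cup (M \setminus M')$, where $M'$ is a perfect matching of $H = G[S]$ and $M \setminus M'$ saturates precisely the vertices of $V(G) \setminus S$. In other words, $M \setminus M'$ is a perfect matching of the induced subgraph $G[V(G) \setminus S]$. It is worth stressing that this decomposition is \emph{not} available for an arbitrary vertex subset; it relies crucially on $S$ being a union of matched pairs of $M$.

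First I would fix an edge $e$ that is allowed in $H$ and let $M_H$ be a perfect matching of $H$ witnessing this, so that $e \in M_H$. Since $H$ is an induced subgraph of $G$, every edge of $M_H$ is also an edge of $G$, and $M_H$ saturates exactly the vertices of $S$. Next I would simply glue the two matchings together: define
$$M^\star := M_H \cup (M \setminus M').$$
The two pieces are vertex-disjoint, because $M_H$ covers only vertices of $S$ while $M \setminus M'$ covers only vertices of $V(G) \setminus S$; jointly they saturate all of $V(G)$. Every edge of $M^\star$ lies in $E(G)$, and no vertex is covered twice, so $M^\star$ is a perfect matching of $G$. Since $e \in M_H \subseteq M^\star$, the edge $e$ belongs to a perfect matching of $G$, i.e., $e$ is allowed in $G$, which is what we wanted.

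I do not expect a genuine obstacle here: the entire content is the structural observation that the ``special'' status of $S$ forces $M \setminus M'$ to be a perfect matching of the portion of $G$ lying outside $H$, so that any perfect matching of $H$ can be extended to one of $G$ by re-attaching the untouched matching edges. The only point requiring (routine) care is verifying that the two combined matchings are genuinely disjoint and jointly spanning, and both facts follow at once from the fact that $S$ is a union of $M$-matched pairs.
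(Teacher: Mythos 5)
Your proof is correct and follows essentially the same route as the paper: both extend a perfect matching $M_H$ of $H$ containing the allowed edge by the edges of $M$ matching the vertices outside $S$, yielding a perfect matching of $G$ (the paper writes this extension as $M_H\cup\{a_{k+1}b_{k+1},\ldots,a_nb_n\}$ after relabelling $S=\{a_1,b_1,\ldots,a_k,b_k\}$, which is exactly your $M_H\cup(M\setminus M')$). Your version is, if anything, slightly more carefully argued, since you explicitly verify disjointness and that the union saturates $V(G)$.
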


\begin{proof}
 If $S=V(G)$, we are done. Otherwise without loss of generality, let $S$ be the set $\{a_1,b_1,a_2,b_2,\ldots,a_k,b_k\}$. Consider an edge $a_ib_j$ which is allowed in $H$ for some $i,j\in [k]$. Consider the matching $M_H$ which is a perfect matching of $H$ containing $a_ib_j$. Consider $M'=M_H\cup \{a_{i+1}b_{i+1},\ldots,a_kb_k\}$. Clearly $M'$ is a perfect matching of $G$ containing $a_ib_j$. Thus, $a_ib_j$ is allowed in $G$.
 \end{proof}

We now turn to a key definition: one of ``contracting'' a special elementary subset to a single edge. We refer the reader to~\Cref{fig:contraction} for an example of an application of this operation.

\begin{definition}[Matching Contraction Graph]\label{MCG}
Consider a connected bipartite graph $G$ with bipartition $A$ and $B$ and a perfect matching $M$. Let $S$ and $H=G[S]$ be a special subset of $V(G)$ and the corresponding special induced subgraph of $G$. If $H$ is elementary, replace $A\cap S$ by a new vertex $\alpha$ and $B\cap S$ by a new vertex $\beta$ in $G$. Add an edge $\alpha\beta$ in the new graph. For an edge $ab$ where $a\in A\backslash S$ and $b \in S$, add an edge $a\beta$ to the new graph. Similarly for an edge $ab$ where $a\in S$ and $b \in B\backslash S$, add an edge $\alpha b$ to the new graph. Keep all the other edges and vertices of $G$ unchanged. If there are any parallel edges, replace them by a single edge. The new graph $G_S$ thus obtained is called the \textbf{matching contraction graph} of $G$ w.r.t. $S$ and the procedure of obtaining a matching contraction graph is called as \textbf{contraction}.
\end{definition}

Note that if $G$ is a connected bipartite graph and $S$ is a special subset, then the matching contraction graph $G_S$ is also connected and bipartite.

\begin{figure}

\begin{subfigure}[t]{0.45\textwidth}

    \centering
    
    \begin{tikzpicture}
    {
      \draw [Fuchsia,thick,rounded corners] (-1.2,6.8) rectangle (5.2,11.2);   
    \draw [DodgerBlue,thick,rounded corners,fill=LightGray!25] (-0.75,7.2) rectangle (0.75,10.8);
    \draw [DodgerBlue,thick,rounded corners,fill=LightGray!25] (3.25,7.2) rectangle (4.75,10.8);
    \draw[fill=DodgerBlue!77] (0, 10) circle (0.4cm);
    \node at (0,10){$a_1$};
    \draw[fill=DodgerBlue!77] (0, 9) circle (0.4cm);
    \node at (0,9){$a_2$};
    \draw[fill=DodgerBlue!77] (0, 8) circle (0.4cm);
    \node at (0,8){$a_3$};
    \draw[fill=DodgerBlue!77] (0, 6) circle (0.4cm);
    \node at (0,6){$a_{4}$};
    \draw[fill=DodgerBlue!77] (0, 5) circle (0.4cm);
    \node at (0,5){$a_{5}$};
    \draw[fill=DodgerBlue!77] (0, 4) circle (0.4cm);
    \node at (0,4){$a_6$};
     \draw[fill=DodgerBlue!77] (0, 3) circle (0.4cm);
    \node at (0,3){$a_7$};
    \draw[fill=DodgerBlue!77] (4, 10) circle (0.4cm);
    \node at (4,10){$b_1$};
    \draw[fill=DodgerBlue!77] (4, 9) circle (0.4cm);
    \node at (4,9){$b_2$};
    \draw[fill=DodgerBlue!77] (4,8) circle (0.4cm);
    \node at (4,8){$b_3$};
    \draw[fill=DodgerBlue!77] (4, 6) circle (0.4cm);
    \node at (4,6){$b_{4}$};
    \draw[fill=DodgerBlue!77] (4, 5) circle (0.4cm);
    \node at (4,5){$b_{5}$};
    \draw[fill=DodgerBlue!77] (4, 4) circle (0.4cm);
    \node at (4,4){$b_6$};
     \draw[fill=DodgerBlue!77] (4, 3) circle (0.4cm);
    \node at (4,3){$b_7$};
 \foreach \x in {3,4,5,6,8,9,10}
 \draw[thick] (0.4,\x) -- (3.6,\x);
 \draw[thick] (0.4,10) -- (3.6,8);
  \draw[thick] (0.4,8) -- (3.6,10);
 \draw[thick] (0.4,9) -- (3.6,10);
 \draw[thick] (0.4,10) -- (3.6,9);
 \draw[thick] (0.4,10) -- (3.6,6);
 \draw[thick] (0.4,8) -- (3.6,4);
 \draw[thick] (0.4,9) -- (3.6,3);
 \draw[thick] (0.4,6) -- (3.6,10);
 \draw[thick] (0.4,5) -- (3.6,10);
 \draw[thick] (0.4,5) -- (3.6,9);
 \draw[thick] (0.4,4) -- (3.6,8);
 
    }
    \end{tikzpicture}

    \caption{A graph $G$ with special elementary subset $S=\{a_1,b_1,a_2,b_2,a_3,b_3\}$.}
    \label{fig:my_label-b}
\end{subfigure}
~~~~~~~~~~~~~~~
\begin{subfigure}[t]{0.45\textwidth}

    \centering
    
    \begin{tikzpicture}
    {
    \draw[fill=Cyan] (0, 8) circle (0.6cm);
    \node at (0,8){$a_3$};
    \draw[fill=DodgerBlue!77] (0, 6) circle (0.4cm);
    \node at (0,6){$a_{4}$};
    \draw[fill=DodgerBlue!77] (0, 5) circle (0.4cm);
    \node at (0,5){$a_{5}$};
    \draw[fill=DodgerBlue!77] (0, 4) circle (0.4cm);
    \node at (0,4){$a_6$};
     \draw[fill=DodgerBlue!77] (0, 3) circle (0.4cm);
    \node at (0,3){$a_7$};
    \draw[fill=Cyan] (4,8) circle (0.6cm);
    \node at (4,8){$b_3$};
    \draw[fill=DodgerBlue!77] (4, 6) circle (0.4cm);
    \node at (4,6){$b_{4}$};
    \draw[fill=DodgerBlue!77] (4, 5) circle (0.4cm);
    \node at (4,5){$b_{5}$};
    \draw[fill=DodgerBlue!77] (4, 4) circle (0.4cm);
    \node at (4,4){$b_6$};
     \draw[fill=DodgerBlue!77] (4, 3) circle (0.4cm);
    \node at (4,3){$b_7$};
 \foreach \x in {3,4,5,6}
 \draw[thick] (0.4,\x) -- (3.6,\x);
 \draw[thick] (0.6,8) -- (3.4,8);
 \draw[thick] (0.6,8) -- (3.6,6);
 \draw[thick] (0.6,8) -- (3.6,4);
 \draw[thick] (0.6,8) -- (3.6,3);
 \draw[thick] (0.4,6) -- (3.4,8);
 \draw[thick] (0.4,5) -- (3.4,8);
 \draw[thick] (0.4,4) -- (3.4,8);
 
    }
    \end{tikzpicture}

    \caption{The matching contraction graph of $G$ w.r.t. $S$.}
    \label{fig:my_label-c}
\end{subfigure}
    \caption{Demonstrating the operation of contracting a special elementary subset to a single edge. For a more elaborate example involving a sequence of contractions, see Figure 12.}
    \label{fig:contraction}
\end{figure}
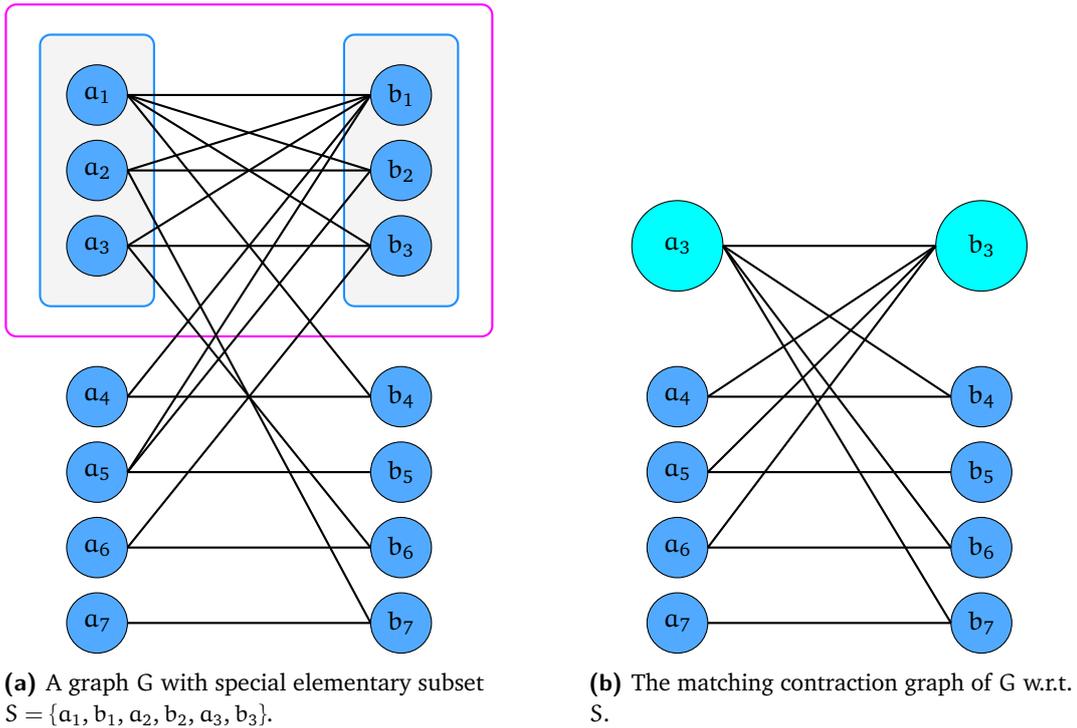

It can be seen that if the special subset $S$ has size $2k$, then the matching contraction graph has $2(n-k+1)$ vertices, a perfect matching of size $n-k+1$ and thus a minimum vertex cover of size $n-k+1$.

\begin{definition}[Special elementary subset]
Let $G$ be a connected bipartite graph with a perfect matching. A special subset $S$ such that $G[S]$ is elementary is called a \textbf{special elementary subset}.
\end{definition}

We consider a series of contractions over a graph $G$. Each time we contract a special elementary subset $S$ of $G$. We rename each vertex in the new graph as the largest indexed vertex in the original graph. We associate a label with each vertex, which carries the names of the vertices of the original graph that were contracted to this new vertex. We keep on repeating this procedure until we reach a graph which cannot be contracted further. We denote this graph by $G_{\bar{S}}$. We fix a perfect matching $M$ in $G$ and in each contracted graph, we get a matching given by $M\setminus E(G[S])\cup \{\alpha \beta\}$ where $\alpha$ and $\beta$ denote the newly obtained vertices on the $A$ and $B$ side respectively. Whenever we say ``matched partner'' or ``matching edge'' in the new graph, unless mentioned otherwise, we will be referring to this matching.

\begin{lemma}[Matching Contraction Graph Property]\label{MCP}

Suppose that $G$ is a connected bipartite graph with a perfect matching $M$. Then, if $G$ is Spartan, then $G_{\bar{S}}$ is Spartan.
\end{lemma}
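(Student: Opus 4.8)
The natural strategy is induction on the number of contraction steps, which by \Cref{MCG} reduces the whole claim to a single atomic step: if $G$ is a connected bipartite Spartan graph with perfect matching $M$, and $S$ is a special \emph{elementary} subset of $V(G)$, then the matching contraction graph $G_S$ is also Spartan. Once this one-step lemma is in hand, applying it repeatedly along the sequence $G = G_0, G_1, \ldots, G_{\bar S}$ yields the result, since each $G_{i+1}$ is obtained from $G_i$ by contracting a special elementary subset and each intermediate graph is again connected bipartite with an inherited perfect matching (as noted after \Cref{MCG}). So the plan is: first establish the one-step statement, then invoke induction.

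\textbf{The one-step lemma.}
Recall that $mvc(G_S) = n - k + 1$ when $|S| = 2k$, so it suffices to exhibit a winning strategy for the defender in $G_S$ using exactly $mvc(G_S)$ guards. The key idea is to \emph{simulate} a play on $G_S$ by a play on $G$: the defender already has a winning strategy on $G$ with $n$ guards (since $G$ is Spartan), and I want to convert any attack on $G_S$ into a sequence of attacks on $G$, defend it there using the known strategy, then project the resulting configuration back down to $G_S$. The first step is to set up a correspondence between configurations: a configuration of $mvc(G_S) = n-k+1$ guards on $G_S$ should ``lift'' to a configuration of $n$ guards on $G$, where the single guard sitting on the contracted vertex $\alpha$ (resp. $\beta$) expands to guards covering the appropriate endpoints of $M' \subseteq M$ inside $S$. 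Because $G[S] = H$ is elementary, $H$ has perfect matchings realizing either side as a minimum vertex cover, which is exactly what lets the single contracted guard ``spread out'' to cover all of $A\cap S$ or all of $B \cap S$ consistently with a vertex cover of $G$.

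\textbf{Carrying out the simulation.}
Concretely, I would argue that any edge $e$ attacked in $G_S$ corresponds to an edge (or a short edge-sequence) in $G$: an edge incident to $\alpha$ or $\beta$ lifts to an edge incident to $S$ in $G$, and an edge $\alpha\beta$ lifts to an edge of $H$. For the latter, I would use that $H$ is elementary (every edge of $H$ is allowed, by \Cref{prop1}) together with \Cref{AE}, which guarantees that an edge allowed in $H$ is allowed in $G$ — so the contracted guards can reconfigure across $H$ in a manner that mirrors moving a guard across $\alpha\beta$. The defender then responds in $G$ with the winning strategy, obtains a new vertex cover of $G$, and I project this down to a vertex cover of $G_S$ of the right size; the elementarity of $H$ ensures the projection keeps all of $A\cap S$ or all of $B\cap S$ (so it collapses to a single guard on $\alpha$ or $\beta$ consistently). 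The main obstacle — and the part requiring the most care — is verifying that the projected configuration really does use only $n-k+1$ guards and really does form a valid vertex cover after \emph{every} possible attack, i.e.\ that the guards inside $S$ always end up occupying precisely one side of $H$ so that they collapse cleanly to a single contracted guard. This is where the hypothesis that $G$ is Spartan (not merely that it has a perfect matching) is essential: it is the Spartan strategy on $G$, combined with the elementary structure of $H$, that forces the interior guards into a coherent one-sided pattern, letting the lift/project correspondence be well-defined in both directions and thereby transferring the winning strategy from $G$ down to $G_S$.
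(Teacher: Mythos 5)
Your plan hinges on a one-step lemma proved by simulation: lift an attack from $G_S$ to $G$, let the (black-box) Spartan strategy on $G$ respond, and project the resulting configuration back down. There is a genuine gap in the projection step, and it is not the one you flag. The part you call the main obstacle --- that the guards inside $S$ always occupy exactly one side of $H$ --- is actually the easy part: any minimum vertex cover $X$ of $G$ satisfies $|X \cap S| = k$ by a matching count, so $X \cap S$ is a minimum vertex cover of the elementary graph $G[S]$ and hence equals $A \cap S$ or $B \cap S$; no appeal to the Spartan strategy is needed. The real problem is that the projected \emph{movement} need not be a legal move in $G_S$. The strategy on $G$ may, in a single round, move several guards across the boundary of $S$ --- say two guards from $B \setminus S$ enter $A \cap S$ while two guards leave $A \cap S$ for vertices $b_{r_1}, b_{r_2} \in B \setminus S$. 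Under projection this would require two guards to vacate $\alpha$ simultaneously (but $\alpha$ carries only one guard) and two to enter it; the moves cannot be rerouted, since $b_{r_1}$ and $b_{r_2}$ lie on the same side of $G_S$ and are not adjacent to the other vacated vertices. So the projected transition is not realizable, and the fallback of letting the $G_S$-defender move to some \emph{other} minimum vertex cover destroys the simulation invariant: Spartanness only guarantees a winning strategy from \emph{some} initial configuration, not from an arbitrary minimum vertex cover, so you cannot consistently re-lift after a deviation. Repairing this (e.g., normalizing the $G$-strategy to cross the boundary at most once per round, or a Hall-type argument that a legal transition to $\pi(X_2)$ always exists) is substantive missing work.

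The paper avoids this difficulty entirely by transferring the \emph{attacker's} win upward rather than the defender's strategy downward, and it never proves (or needs) your one-step preservation lemma. It argues directly about the maximal contraction $G_{\bar S}$: either $G_{\bar S}$ is a single edge, which is trivially Spartan, or by \Cref{stru} it has a degree-one vertex $\beta_1$, in which case $G_{\bar S}$ is not Spartan by \Cref{prop:nodeg1}, and the paper then shows $G$ itself is not Spartan --- contradicting the hypothesis, so this case cannot occur. The key point is structural, not strategic: the label sets $A_1, B_1$ that were contracted to $\alpha_1, \beta_1$ induce an elementary subgraph of $G$ (by \Cref{MC}, applied to the reverse expansions), $B_1$ has no neighbours outside $A_1$, and hence every minimum vertex cover of $G$ is one-sided on $A_1 \cup B_1$. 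Two well-chosen attacks then trap $k+1$ guards inside $G[A_1 \cup B_1]$, leaving a matching edge elsewhere unprotectable. This sidesteps precisely the reconfiguration-realizability issue on which your simulation founders.
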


\begin{proof}
If $G_{\bar{S}}$ is a single edge, then $G$ was elementary to begin with, as each expansion preserves the ``elementary-ness'' by \Cref{MC}. And since a single edge is Spartan, $G_{\bar{S}}$ is Spartan in this case.

Consider the case where $G_{\bar{S}}$ is not a single edge. Now by \Cref{stru}, $G_{\bar{S}}$ must have a degree-$1$ vertex. Without loss of generality, we assume that this degree-$1$ vertex is on the $B$ side and let us denote this vertex by $\beta_1$ and its matched partner by $\alpha_1$. Also, note that $G_{\bar{S}}$ is not Spartan by \Cref{prop:nodeg1}. We show that $G$ was also not Spartan in this case.

Let $B_1$ be the set of vertices in $G$ in the label of $\beta_1$, i.e, which were contracted to get $\beta_1$ and similarly $A_1$ be the set of vertices in $A$ which were contracted to get $\alpha_1$ and let $|A_1|=|B_1|=k$. Note that this contraction may not have happened in one step, but over a series of steps. Notice that $G\setminus \{A_1 \cup B_1\}$ is non-empty, as $G_{\bar{S}}$ is not a single edge. Note that no vertex in $B_1$ can be adjacent to a vertex in $A\setminus A_1$. Suppose some vertex $b\in B_1$  is adjacent to a vertex $a\in A\setminus A_1$, this will result in an edge $a\beta_1$ in $G_{\bar{S}}$ which contradicts the fact that $\beta_1$
is a degree-$1$ vertex. Also, we know that after a series of contractions, $G[A_1\cup B_1]$ got contracted to a single edge. By \Cref{MC}, $G[A_1\cup B_1]$ is elementary. Therefore, any minimum sized vertex cover of $G$ will contain all the $k$ vertices of $A_1$ and no vertex from $B_1$ or all the $k$ vertices of $B_1$ and no vertex from $A_1$. 

Now consider any initial configuration of guards on $G$ with $n$ guards. As seen above, either all the vertices of $B_1$ have guards or all the vertices of $A_1$ have guards and only one of these can happen. If all the vertices of $A_1$ are occupied by guards, the attacker attacks a matched edge $a_ib_i$ such that $a_i\in A_1$. The guard on $a_i$ is forced to move to $b_i$ and cannot move any further. If all the guards cannot reconfigure to form a vertex cover, then $G$ is not Spartan and hence we are done. Otherwise, all the guards must reconfigure in such a way that all the vertices in $B_1$ have guards and all the vertices in $A_1$ do not have guards.
Thus without loss of generality, we can assume that all the vertices in $B_1$ have guards and all the vertices in $A_1$ do not have guards.

Now recall that no vertex in $B_1$ has a neighbour in $A\setminus A_1$. Since $G$ is connected, there exists $a_p\in A_1$ and $b_q \in B\setminus B_1$ such that $a_pb_q\in E(G)$. Since $a_p$ does not have a guard, $b_q$ must have a guard as the guards are occupying a vertex cover. Now suppose the attacker attacks the edge $a_pb_q$, the guard on $b_q$ must move to $a_p$. Since all the vertices in $B_1$ have guards but none of them has a neighbour in $A\setminus A_1$ and no vertex in $A_1$ had a guard before this guard just moved to $a_p$, now no guard can move from $\{A_1 \cup B_1\}$ to $V(G)\setminus A_1$. Thus there will be $k+1$ guards on $k$ matching edges in $G[A_1\cup B_1]$. Therefore, some edge in $G[V(G)\setminus \{A_1 \cup B_1\}]$ will be vulnerable and can be attacked in the next move. Hence, $G$ is not Spartan.
\end{proof}

\emph{Remark:} We have actually shown that any bipartite graph $G$ which is not ``essentially elementary'' can be destroyed by the attacker in at most three moves when the defender has $mvc(G)$ many guards.

\begin{sidewaysfigure}
    \includegraphics[scale=0.35]{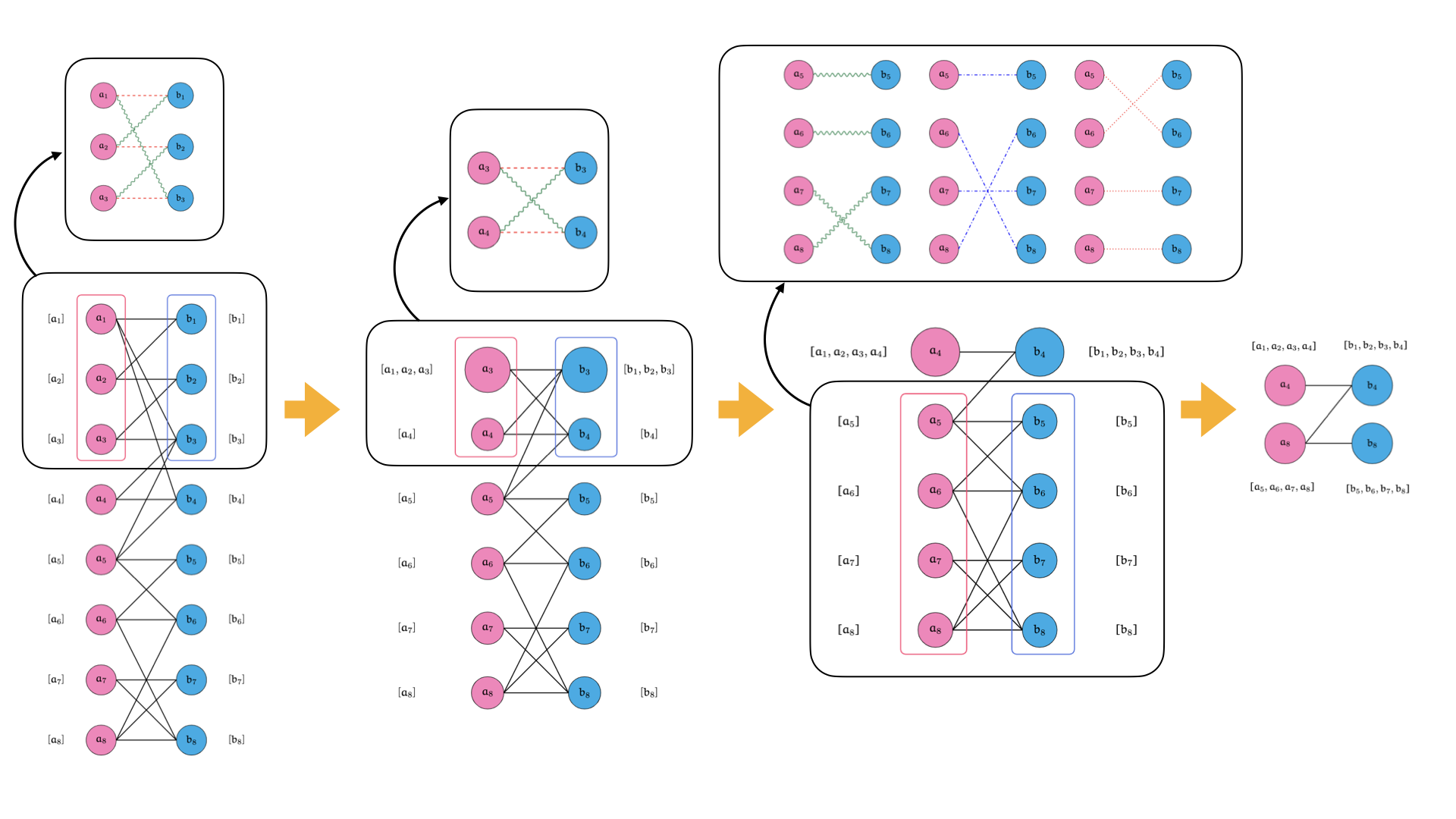}
    \caption{Demonstrating a sequence of contractions.}
    \label{contraction1}
\end{sidewaysfigure}

\begin{sidewaysfigure}
    \includegraphics[scale=0.35]{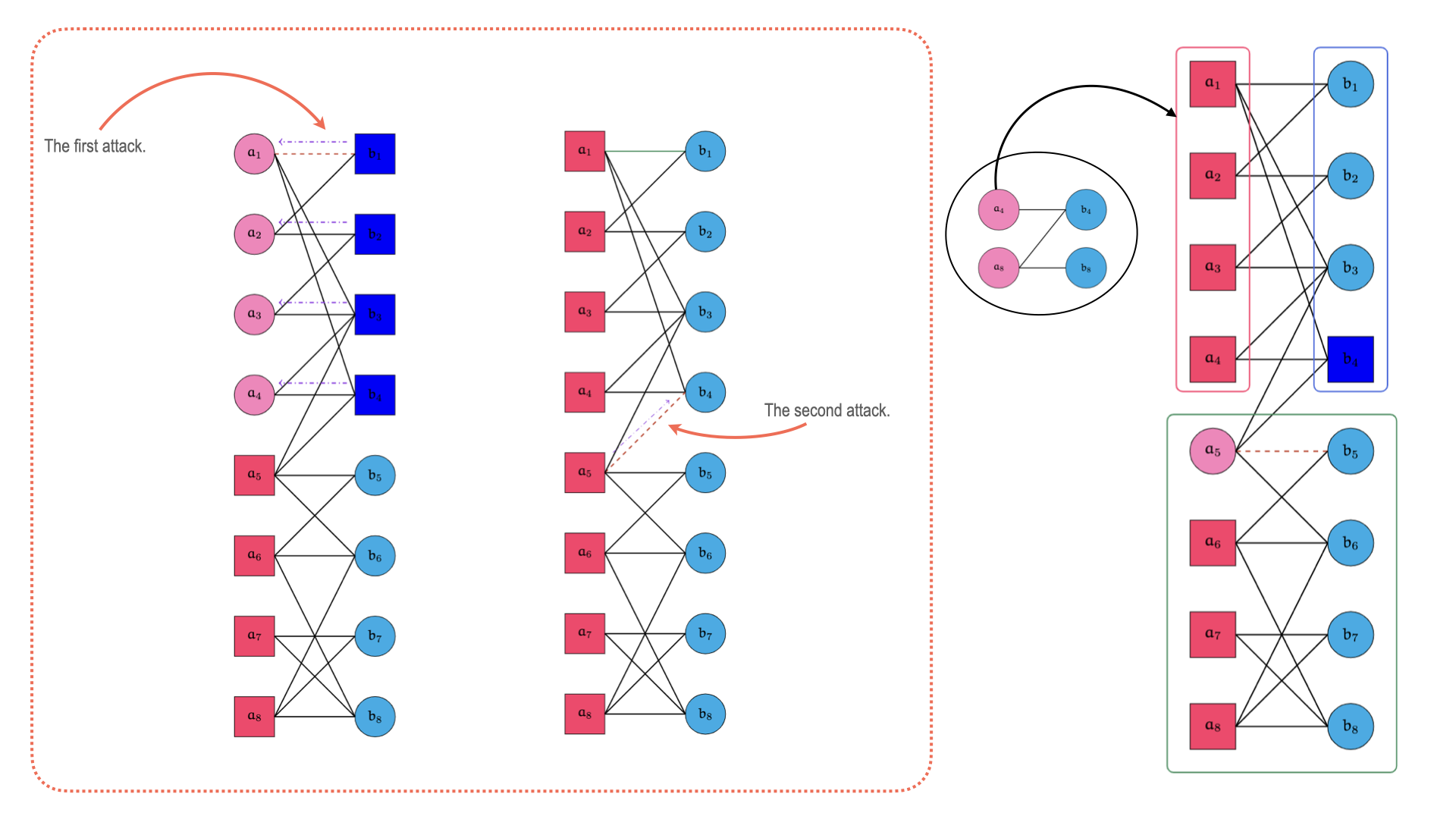}
    \caption{Demonstrating how attacks are mimicked.}
    \label{contraction2}
\end{sidewaysfigure}

\begin{definition}[Maximal Contraction Graph]\label{MaxG}
Let $G$ be a connected bipartite graph with a perfect matching and there does not exist any special elementary subset $S$ of $V(G)$. Then $G$ is said to be a \textbf{maximal contraction graph.}
\end{definition}

Note that by definition the size of a special subset is always more than $2$ and hence if $|V(G)|=2$, i.e., $G$ is a single edge, then $G$ is a maximal contraction graph as $G$ has no special subset and hence no special elementary subset. Note that the sequence of contractions above ends in a maximal contraction graph that is in fact an edge.
 
\begin{lemma}[Maximal Contraction Graph]\label{MC}
 Let $G$ be a connected bipartite graph with a perfect matching $M$. Let $S$ be a special elementary subset of $V(G)$ and $G_S$ be the corresponding matching contraction graph. Then, if $G_S$ is elementary, then $G$ is also elementary.
\end{lemma}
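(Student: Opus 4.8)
The plan is to verify the definition of elementary directly. Since $G$ is connected and has the perfect matching $M$ by hypothesis, it suffices to show that every edge of $G$ lies in some perfect matching of $G$. I would partition the edges of $G$ into four types according to how they meet the special subset $S$: (i) edges with both endpoints in $S$ (the edges of $H = G[S]$); (ii) edges with both endpoints in $V(G)\setminus S$; (iii) edges from $A\cap S$ to $B\setminus S$; and (iv) edges from $A\setminus S$ to $B\cap S$. Type (i) edges are handled immediately: since $H$ is elementary every such edge is allowed in $H$, hence allowed in $G$ by \Cref{AE}. The remaining three types are precisely the edges of $G$ that survive (after merging parallels) as edges of $G_S$, so each corresponds to an edge of $G_S$, and I would exploit that $G_S$ is elementary to find a witnessing perfect matching there.

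The heart of the argument is a lifting procedure that turns a perfect matching $M_S$ of $G_S$ into a perfect matching $M'$ of $G$ while controlling which edges are preserved. There are two cases. If $\alpha\beta \in M_S$, then $M_S \setminus \{\alpha\beta\}$ is a perfect matching of $G[V(G)\setminus S]$, and appending any perfect matching of $H$ (which exists since $H$ is elementary) yields the desired $M'$, with all edges of $M_S$ not incident to $\alpha,\beta$ retained verbatim. If instead $\alpha$ is matched to some $b' \in B\setminus S$ and $\beta$ to some $a' \in A\setminus S$, then the $G_S$-edges $\alpha b'$ and $a'\beta$ arise from genuine $G$-edges $ab'$ and $a'b$ with $a \in A\cap S$ and $b \in B\cap S$; I keep all edges of $M_S$ internal to $V(G)\setminus S$, add $ab'$ and $a'b$, and then match the leftover set $S\setminus\{a,b\}$ by a perfect matching of $H - a - b$. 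Crucially, in this second case I may choose the endpoints $a$ and $b$ among the neighbours realizing the contracted edges, which is exactly the freedom needed to steer a prescribed edge into $M'$.

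With the lifting in hand, the edge-by-edge verification is routine. For a type (ii) edge $e$, take a perfect matching of $G_S$ containing $e$ (it exists as $G_S$ is elementary); since $e$ avoids $\alpha$ and $\beta$ it is preserved by either lifting case, so $e$ is allowed in $G$. For a type (iii) edge $e = ab'$ with $a \in A\cap S$ and $b' \in B\setminus S$, take a perfect matching of $G_S$ containing $\alpha b'$; this forces the second lifting case, and because $a$ is a neighbour of $b'$ realizing $\alpha b'$, I choose precisely this $a$, so $ab' = e$ enters $M'$. Type (iv) is symmetric, using $a'\beta$. Together with connectivity this establishes that $G$ is elementary.

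The step I expect to be the main obstacle is the claim used silently in the second lifting case: that for the elementary bipartite graph $H$ with bipartition $(A\cap S, B\cap S)$ and \emph{any} $a \in A\cap S$, $b \in B\cap S$ (not necessarily adjacent), the graph $H-a-b$ has a perfect matching. I would prove this via Hall's condition, first showing the neighbourhood-expansion property of elementary bipartite graphs, namely $|N_H(Y)| \geq |Y| + 1$ for every proper nonempty $Y \subsetneq A\cap S$. This holds because a deficiency $|N_H(Y)| = |Y|$ would separate $Y \cup N_H(Y)$ from the rest of $H$: any edge from $N_H(Y)$ to $(A\cap S)\setminus Y$ would, by elementariness, lie in a perfect matching and thereby force $Y$ to match into the smaller set $N_H(Y)$ minus one vertex, which is impossible; this separation contradicts the connectivity of $H$. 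Given the expansion bound, deleting $b$ shrinks each neighbourhood by at most one, so Hall's condition for matching $(A\cap S)\setminus\{a\}$ into $(B\cap S)\setminus\{b\}$ still holds, yielding the required perfect matching of $H-a-b$.
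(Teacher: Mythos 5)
Your proof is correct, but it takes a genuinely different route from the paper's. The paper argues entirely with vertex covers via the characterization in \Cref{prop1}: it observes that any optimal vertex cover $X$ of $G$ must restrict on $S$ to one full side of $G[S]$ (because $G[S]$ is elementary), and then shows that a hypothetical third optimal vertex cover of $G$, distinct from $A$ and $B$, projects --- by replacing $X \cap S$ with the contracted vertex $\alpha$ --- to an optimal vertex cover of $G_S$ distinct from both sides of $G_S$, contradicting that $G_S$ is elementary. That argument is short and needs nothing beyond \Cref{prop1}. You instead verify the matching definition of elementary edge by edge, lifting perfect matchings of $G_S$ to perfect matchings of $G$; this is sound, and your case analysis (whether $\alpha\beta$ lies in the chosen perfect matching $M_S$ or not), together with the freedom to pick which $G$-edge realizes a contracted edge, handles all four edge types correctly, with type (i) delegated to \Cref{AE} exactly as the paper does. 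You also correctly isolate the one nontrivial ingredient the paper never needs: that $H-a-b$ has a perfect matching for every $a \in A\cap S$ and $b \in B\cap S$. Your proof of this via Hall's condition and the expansion property $|N_H(Y)| \geq |Y|+1$ for proper nonempty $Y \subsetneq A\cap S$ (a tight set would force a separation of $Y \cup N_H(Y)$ from the rest, contradicting connectivity) is valid; this is the classical fact that elementary bipartite graphs are precisely the connected ones in which deleting one vertex from each side leaves a perfectly matchable graph. In exchange for being longer, your route is constructive --- it exhibits explicitly how perfect matchings of $G_S$ and of $H$ compose into perfect matchings of $G$, which is also the information the defender's strategy actually uses --- whereas the paper's vertex-cover projection buys brevity by avoiding Hall's theorem and the case analysis altogether.
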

\begin{proof}

Let $G = (A \cup B,E)$. To show that $G$ is elementary, we show that $A$ and $B$ are the only optimal vertex covers of $G$. We assume that the edges of the perfect matching in $G$ are $\{a_{1}b_{1}, a_{2}b_{2}, \cdots, a_{n}b_{n}\}$ and that the special subset $S$ is induced by the vertices based on the edges $\{a_1b_1, \cdots, a_kb_k\}$. Let $\alpha$ and $\beta$ denote the endpoints of the edge in $G_S$ created by the contraction of $S$. Finally, let $(A^\prime,B^\prime)$ denote the partition of $G_S$.

Note that $G[S]$ is assumed to be elementary. If $X$ is an optimal vertex cover for $G$, then $X \cap S$ is a vertex cover for $G[S]$ of size $k$, since $G \setminus S$ has a matching of size $(n-k)$. Observe that:

\begin{itemize}
    \item either $X \cap A = \{a_1,\cdots,a_k\}$ and $X \cap B = \emptyset$, or
    \item $X \cap A = \emptyset$ and $X \cap B = \{b_1,\cdots,b_k\}$,
\end{itemize}

because any other subset of $k$ vertices that forms a valid vertex cover for $G[S]$ would contradict the assumption that $G[S]$ is elementary.

Now assume that $G$ has a vertex cover $X$ such that $X \neq A$ and $X \neq B$. Without loss of generality, assume that $X \cap A = \{a_1,\cdots,a_k\}$ (the argument for the other case is symmetric). We let $X_A := X \cap \{a_{k+1},\cdots,a_n\}$ and $X_B = X \cap \{b_{k+1},\cdots,b_n\}$. Note that $X_B \neq \emptyset$ because $X \neq A$. 

We now claim that $X^\prime := X_A \cup X_B \cup \{\alpha\}$ is an optimal vertex cover in $G_S$ that is different from both $A^\prime$ and $B^\prime$. It is clear that $|X^\prime| = (n-k+1)$, thus the size of $X^\prime$ is $mvc(G_S)$. Further, since $\alpha \in A^\prime$ and $X_B \neq \emptyset$, $X^\prime \neq A^\prime$ and $X^\prime \neq B^\prime$. It remains to be shown that $X^\prime$ covers all edges in $G_S$.

Note that if any edge $a_i b_j$ in $G_S[\{a_{k+1},\cdots,a_n\} \cup  \{b_{k+1},\cdots,b_n\}]$ is not covered by $X^\prime$, then the $a_i b_j$ is also not covered by $X$ in $G$. All edges incident on $\alpha$ are also covered. Now suppose an edge of the form $\beta a_\ell$ is not covered by $X^\prime$ for some $\ell \in \{a_{k+1},\cdots,a_n\}$. Note that for this edge to be present in $G_S$, by the definition of the contraction operation, there must have been an edge of the form $b_i a_\ell$, for some $i \in [k]$. Note that if $\beta a_\ell$ is not covered by $X^\prime$ then $b_i a_\ell$ is not covered by $X$ in $G$ either, which contradicts our assumption that $X$ was a vertex cover in $G$. 

This shows that $G_S$ has an optimal vertex cover different from both $A^\prime$ and $B^\prime$, but this contradicts our assumption that $G_S$ was elementary to begin with. Therefore it must be the case that $G$ is also elementary.
  \end{proof}

\begin{definition}[Alternating cycle]\label{AC}
Let $G$ be a connected bipartite graph with no degree $1$ vertex and a perfect matching $M=\{a_1b_1,a_2b_2,\ldots a_nb_n\}$. We define a cycle $a_{i_1}b_{i_1}a_{i_2}b_{i_2}\ldots b_{i_k}a_{i_k}a_{i_1}$ in $G$ as an \textbf{alternating cycle} where $i_1,i_2,\ldots,i_k\in [n]$ and $2\leq k\leq n$.
\end{definition}

\begin{proposition}[No alternating cycle]\label{NAC}
Let $G$ be a maximal contraction graph. Then $G$ cannot contain any alternating cycle.
\end{proposition}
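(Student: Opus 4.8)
The plan is to prove the contrapositive: if a connected bipartite graph $G$ (with a perfect matching $M = \{a_1b_1, \ldots, a_nb_n\}$ and no degree-one vertex) \emph{does} contain an alternating cycle, then $G$ is \emph{not} a maximal contraction graph, i.e., $G$ admits a special elementary subset that can be contracted. Recall that an alternating cycle is a cycle $a_{i_1}b_{i_1}a_{i_2}b_{i_2}\ldots a_{i_k}a_{i_1}$ whose edges alternate between $M$ and $E \setminus M$; in particular its matching edges are exactly $\{a_{i_1}b_{i_1}, \ldots, a_{i_k}b_{i_k}\}$, so its vertex set $S = \{a_{i_1}, b_{i_1}, \ldots, a_{i_k}, b_{i_k}\}$ is precisely a special subset in the sense of~\Cref{SS} (with $2 \leq k \leq n$).

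First I would observe that the alternating cycle, as a cycle on $S$, is exactly a spanning cycle of the induced subgraph $H = G[S]$ of the form required by~\Cref{cycle}: it visits every vertex of $S$, alternating sides of the bipartition. The induced subgraph $H$ consists of this spanning cycle together with possibly some additional edges of $G$ between $A \cap S$ and $B \cap S$. By~\Cref{cycle}, any connected bipartite graph obtained from such a cycle by adding edges (while preserving bipartiteness) is elementary. Hence $H = G[S]$ is elementary, which means $S$ is a \emph{special elementary subset} of $V(G)$.

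But then by~\Cref{MCG} we can form the matching contraction graph $G_S$, contracting $S$ to a single edge $\alpha\beta$; this is a legitimate contraction operation. This directly contradicts the hypothesis that $G$ is a maximal contraction graph, since a maximal contraction graph is by~\Cref{MaxG} one admitting \emph{no} special elementary subset. Therefore $G$ cannot contain an alternating cycle, which is the desired conclusion.

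The only point requiring a little care — and what I expect to be the main (though mild) obstacle — is verifying that the vertex set of the alternating cycle genuinely yields a spanning cycle of $G[S]$ to which~\Cref{cycle} applies, rather than merely a subgraph-cycle. The indices $i_1, \ldots, i_k$ are distinct by the definition of a cycle, so $S$ has exactly $2k$ distinct vertices and the cycle uses all of them; the cycle edges all lie in $G[S]$, and any extra edges in $G[S]$ run between the two sides of the bipartition, so~\Cref{cycle} applies verbatim to conclude $G[S]$ is elementary. One should also note the degenerate bookkeeping at $k = 2$, where the ``cycle'' on four vertices is still a valid $4$-cycle meeting the $k>1$ requirement of~\Cref{cycle}. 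With this in hand the contradiction with maximality is immediate, completing the proof.
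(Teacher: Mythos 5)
Your proposal is correct and follows essentially the same route as the paper's own proof: take the vertex set $S$ of the alternating cycle, note it is a special subset, apply \Cref{cycle} to conclude that $G[S]$ (the spanning cycle plus any extra induced edges) is elementary, and contradict the maximality of $G$. The paper additionally cites \Cref{AE} alongside \Cref{cycle}, but as your argument shows, \Cref{cycle} already yields that $G[S]$ itself is elementary, which is all the definition of a special elementary subset requires.
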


\begin{proof}
 Let $G$ be a maximal contraction graph. It is clear that $G$ is a connected bipartite graph with a perfect matching. Let $M=\{a_1b_1,a_2b_2,\ldots,a_nb_n\}$ be a perfect matching of $G$. If $n=1$, $G$ cannot contain any cycle and hence cannot contain any alternating cycle. 
 Let $n\geq 2$ and let $C=a_{i_1}b_{i_1}a_{i_2}b_{i_2}\ldots b_{i_k}a_{i_k}a_{i_1}$ be an alternating cycle in $G$. Let $S=V(C)$. Clearly $S$ is a special subset of $V(G)$ and by \Cref{cycle} and \Cref{AE}, $S$ is a special elementary subset of $V(G)$. This contradicts the fact that $G$ is a maximal contraction graph. Thus $G$ cannot have an alternating cycle. 
\end{proof}

\begin{proposition}[Structure of the Maximal Contraction]\label{stru}
Let $G$ be a maximal contraction graph. Then $G$ must have a degree $1$ vertex.
\end{proposition}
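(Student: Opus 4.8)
The plan is to prove the contrapositive using \Cref{NAC}: I will show that if a maximal contraction graph $G$ had \emph{no} degree-one vertex, then it would contain an alternating cycle, contradicting \Cref{NAC}. Write the fixed perfect matching as $M = \{a_1b_1, \ldots, a_nb_n\}$. If $n = 1$ then $G$ is a single edge and trivially has a degree-one vertex, so I would assume $n \geq 2$ and, for contradiction, that every vertex of $G$ has degree at least two.

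First I would build an $M$-alternating walk. Start at $a_1$ and traverse the matching edge to $b_1$. The observation driving the construction is an alternation invariant: whenever the walk sits at a $b$-vertex having just crossed a matching edge, the minimum-degree-$2$ assumption guarantees a \emph{non-matching} edge to leave along (the unique matching edge at that $b$-vertex is the one just used), and whenever the walk sits at an $a$-vertex having just crossed a non-matching edge, its matching edge is available to leave along. Hence the walk can be continued indefinitely, producing an infinite sequence of matching edges $a_{i_1}b_{i_1}, a_{i_2}b_{i_2}, \ldots$ with $i_1 = 1$, where consecutive indices satisfy $i_{t+1} \neq i_t$ (a non-matching edge out of $b_{i_t}$ cannot return to $a_{i_t}$).

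Next I would extract the cycle. Since the indices $i_1, i_2, \ldots$ all lie in the finite set $[n]$, some index must repeat; let $s$ be minimal with $i_s = i_r$ for some $r < s$, so that $i_r, i_{r+1}, \ldots, i_{s-1}$ are pairwise distinct. Then the vertices $a_{i_r}, b_{i_r}, a_{i_{r+1}}, b_{i_{r+1}}, \ldots, a_{i_{s-1}}, b_{i_{s-1}}$ are all distinct, and together with the closing non-matching edge $b_{i_{s-1}}a_{i_r}$ (which the walk actually traversed, since $a_{i_s} = a_{i_r}$) they form the closed walk
\[
a_{i_r}\, b_{i_r}\, a_{i_{r+1}}\, b_{i_{r+1}}\, \cdots\, a_{i_{s-1}}\, b_{i_{s-1}}\, a_{i_r},
\]
whose edges alternate between $M$ and $E \setminus M$, starting with the matching edge $a_{i_r}b_{i_r}$ and ending with the non-matching edge $b_{i_{s-1}}a_{i_r}$. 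This is precisely an alternating cycle in the sense of \Cref{AC}, contradicting \Cref{NAC}. Therefore $G$ must contain a degree-one vertex.

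I expect the main obstacle to be the bookkeeping in the extraction step: one must confirm that the stretch of the walk between the first and second occurrence of the repeated index is a \emph{genuine} cycle, namely that its vertices are pairwise distinct and that the alternation of edge types remains consistent all the way around, including the closing edge. The minimality of $s$ secures distinctness of the $a$-vertices (and hence, via the matching, of the $b$-vertices), while the invariant $i_{t+1} \neq i_t$ forces $s \geq r+2$ and so rules out any degenerate closed walk of length two. Once these points are pinned down, the contradiction with \Cref{NAC} is immediate.
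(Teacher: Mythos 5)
Your proof is correct and follows essentially the same approach as the paper: both arguments assume no degree-one vertex, grow an $M$-alternating walk (extending at each $b$-vertex via the degree hypothesis), and derive a contradiction from the combination of \Cref{NAC} and the finiteness of $V(G)$. The only difference is the direction of the contradiction—the paper uses \Cref{NAC} to argue the walk never revisits a vertex and so would generate an infinite set, while you use finiteness to force a repeated index and then extract an explicit alternating cycle violating \Cref{NAC}—which is the same argument run in mirror image, with your extraction step carried out somewhat more carefully.
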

\begin{proof}
 Let $G$ be a maximal contraction graph. It is clear that $G$ is a connected bipartite graph with a perfect matching. Let $M=\{a_1b_1,a_2b_2,\ldots,a_nb_n\}$ be a perfect matching of $G$. If $n=1$, then $G$ is a single edge and thus $G$ has a degree $1$ vertex. Now let $n\geq 2$ and assume that $G$ has no degree $1$ vertices. Consider the set $S=\{a_1,b_1\}$. Since $b_1$ is not a degree $1$ vertex, we are done. So there exits $a_{i_1}\neq a_1$ which is a neighbor of $b_1$. Add $a_{i_1}$ and $b_{i_1}$ to $S$. Now for each $b_{i_p}$ for $p=1,2,\ldots$, there exists $a_{p+1}$ which is a neighbor of $b_{i_p}$ because $G$ has no degree $1$ vertex or alternating cycle. Add $a_{p+1}$ and $b_{p+1}$ to $S$. Thus $S$ will be an infinite set. But $S\subset V(G)$ which is finite. Hence we have a contradiction.  Thus $G$ must have a degree $1$ vertex.
 \end{proof}

To sum up the proof of~\Cref{conv}: consider any connected bipartite graph $G$. First we have shown that if $G$ is Spartan, it must have a perfect matching and no degree one vertices. Therefore, a maximal contraction graph derived from $G$ must be an edge. But note that an edge is elementary, and running the contractions in reverse to recover $G$ preserves the property of the graph being elementary, and we have the desired conclusion. We remark that our proof shows that every connected bipartite graph with a Hamiltnonian cycle is Spartan. However, it turns out that the converse is not true in the sense that there exist connected bipartite graphs that are Spartan but do not have Hamiltonian cycles: for instance, Figure~\ref{fig:my_label2} presents an example of a connected bipartite Spartan graph that is not Hamiltonian.

\begin{figure}[H]
    \centering
    \begin{tikzpicture}
    {\tikzset{decoration={snake,amplitude=.4mm,segment length=2mm, post length=0mm,pre length=0mm}}
    
    \draw[fill=YellowGreen] (6, 3) circle (0.3cm);
    \node at (6,3){$b_1$};
     \draw[fill=DodgerBlue] (7, 4) circle (0.3cm);
    \node at (7,4){$a_2$};
     \draw[fill=YellowGreen] (7, 5.4) circle (0.3cm);
    \node at (7,5.4){$b_2$};
    \draw[fill=DodgerBlue] (6, 6.4) circle (0.3cm);
    \node at (6,6.4){$a_3$};
     \draw[fill=YellowGreen] (4, 6.4) circle (0.3cm);
    \node at (4,6.4){$b_3$};
    \draw[fill=DodgerBlue] (4, 3) circle (0.3cm);
    \node at (4,3){$a_1$};
   \draw[fill=YellowGreen] (5,4) circle (0.3cm);
  \node at (5,4){$b_4$};
    \draw[fill=DodgerBlue] (5,5.4) circle (0.3cm);
    \node at (5,5.4){$a_4$};
    \draw[thick] (6.25, 3.25)--(6.75, 3.75);
    \draw[thick] (6.8, 5.65)--(6.25, 6.25);
    \draw[thick] (4.3,3)--(5.7,3);
    \draw[thick] (7,4.3)--(7,5.1);
    \draw[thick] (5.7,6.4)--(4.3,6.4);
    \draw[thick] (4,3.3)--(4,6.1);
    \draw[thick]  (4.2,6.2)--(4.9,5.7);
    \draw[thick] (5.3,4)--(6.7,4);
     \draw[thick] (5,5.1)--(5,4.3);
    }
    \end{tikzpicture}
    \caption{An example of an elementary bipartite graph where the two independent sets are $A=\{a_1,a_2,a_3,a_4\}$ and $B=\{b_1,b_2,b_3,b_4\}$ and all the edges belong to one of the three perfect matchings: $M_1=\{a_1b_1,a_2b_2,a_3b_3,a_4b_4\},M_2=\{a_1b_3,a_3b_2,a_2b_1,a_4b_4\}$ or $M_3=\{a_1b_1,a_2b_4,a_4b_3,a_3b_2\}$. It can be verified that this graph does not have a Hamiltonian cycle.}
    \label{fig:my_label2}
\end{figure}
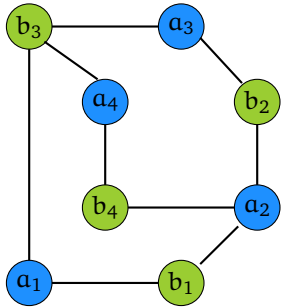

\section{EVC with Extra Steps}

Almost all of the proofs in the previous sections rely crucially on the fact that a guard can only move one step after each attack. This gives rise to the question that what happens if multiple moves are allowed? If retracing of steps is allowed, then any guard can clean up an attack and come back to their original position. Thus this problem will the same as the vertex cover problem and hence it is not so interesting. 

We define a variant of the \EVC{} problem that we call \emph{New Eternal Vertex Cover}. Again we have a two player game with one player as ``the attacker'' and the other player as ``the defender''. Just like the \EVC{} problem, the defender initially places the guards on some of the vertices of the graph (of his choice). The attacker attacks an edge. In response to the attack, the defender can move each guards for at most \emph{two} steps without retracing. The constraint is however that the defense of the attacked edge must happen in the first move itself, i.e., after the attack, at least one guard who was present on one of the endpoints of the graph must move across the attacked  edge. If such a movement is not possible after a finite sequence of attacks, the attacker wins; otherwise, if the defender has a strategy to defend the graph $G$ against an infinite sequence of attacks, the defender wins. There can only be one guard per vertex in the configuration before the attack and the configuration after the attack is defended. However, more than one guard can cross a vertex during the reconfiguration.

We define the \emph{New Eternal Vertex Cover Number} of a graph $G$ as the smallest number of guards required in this new setting such that the defender has a winning strategy. We denote this number by $nevc(G)$. It is clear that since the defense must happen in the first step of a guard, if the vertices occupied by guards do not form a vertex cover, the attacker wins. Therefore, we have $mvc(G)\leq nevc(G)$ for any graph $G$. Also, since $evc(G)$ many guards can defend an infinite sequence of attacks on a graph $G$ with each guard moving just one step after each attack, it can be seen that for any graph $G$, we have $nevc(G) \leq evc(G)$.

\emph{Remark:} These two inequalities imply that for any Spartan graph $G$, we have $mvc(G)=nevc(G)=evc(G)$. However, there are graph classes where these inequalities are not strict.

We show that computing the \emph{New Eternal Vertex Cover} number for a given graph $G$ is \NPH{} in the next lemma. We use a reduction from the \emph{Vertex Cover} problem. Given an input instance $(G,k)$ (where $|V(G)|=n$ and $k<n-1$) of the \emph{Vertex Cover} problem, we construct an equivalent instance of the \emph{New Eternal Vertex Cover} problem by adding a global vertex $\star$ to the graph $G$, i.e., make the new vertex $\star$ adjacent to every other vertex in $G$. It turns out that $G$ has a vertex cover of size $k$ if and only if the defender has a winning strategy in the new setting using $k+1$ guards.  

\begin{lemma}
The \emph{New Eternal Vertex Cover} problem is \NPH{}.
\end{lemma}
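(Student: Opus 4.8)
The plan is to prove \NPH{}ness by exhibiting the polynomial-time reduction from \textsc{Vertex Cover} sketched just before the statement, and then verifying its correctness in both directions. Given an instance $(G,k)$ of \textsc{Vertex Cover} with $|V(G)|=n$ and $k<n-1$, I construct a graph $G^\star$ by adding one new vertex $\star$ adjacent to every vertex of $G$. The reduction is clearly polynomial, so the work is entirely in proving that $G$ has a vertex cover of size $k$ if and only if the defender wins \emph{New Eternal Vertex Cover} on $G^\star$ with $k+1$ guards.

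For the forward direction, suppose $G$ has a vertex cover $C$ with $|C|=k$. I place guards on $C\cup\{\star\}$, which is a vertex cover of $G^\star$ since $\star$ dominates every edge incident to it. The defender's strategy relies on the fact that $\star$ is a universal vertex: whenever an edge $uv$ is attacked, at least one endpoint, say $u$, carries a guard (because $C\cup\{\star\}$ is a vertex cover and the configuration is maintained as a vertex cover). The guard on $u$ moves across $uv$ to $v$ in its first step, defending the attack; then, using the two permitted steps without retracing, I route a guard back onto $\star$ if $\star$ was vacated, and restore a valid vertex cover of the form $C'\cup\{\star\}$ where $C'$ is again a size-$k$ vertex cover of $G$. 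The extra step is what makes this work: a guard leaving $\star$ to patch a hole can be replaced because $\star$ is adjacent to all of $G$, so some other guard can reach $\star$ in one step along a $\star$-incident edge. I would argue that an invariant configuration (guards on $\star$ plus a vertex cover of $G$) can always be reestablished after any attack, giving the defender an infinite winning strategy with $k+1$ guards.

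For the reverse direction, suppose the defender wins on $G^\star$ with $k+1$ guards. Every configuration before an attack must be a vertex cover of $G^\star$, and in particular a vertex cover of $G$ together with whatever covers the $\star$-edges. I would argue that in the initial winning configuration, some set of $k+1$ vertices covers all edges of $G^\star$; restricting to $V(G)$, the at most $k+1$ guards in $V(G)$ must cover all edges of $G$ except possibly those covered solely by $\star$. The crux is to show that one can extract a size-$k$ vertex cover of $G$: since $\star$ covers only edges incident to $\star$ and not edges internal to $G$, the guards on $V(G)$ alone must cover every edge of $G$, and there are at most $k$ such guards if $\star$ is occupied. I would show that in any winning configuration $\star$ must effectively be occupied (or can be assumed occupied), so that exactly the remaining $k$ guards form a vertex cover of $G$, yielding $mvc(G)\le k$.

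The main obstacle is the reverse direction, specifically pinning down why a winning defender must keep $\star$ occupied (or why we may assume so without loss of generality), and why the $k$ non-$\star$ guards are forced to cover all of $G$ rather than relying on the two-step reconfiguration to cover edges transiently. The subtlety is that the defender is allowed two steps, so a naive snapshot argument showing the pre-attack configuration is a vertex cover of $G^\star$ does not immediately bound the guards \emph{inside} $G$. I expect to need the condition $k<n-1$ here: it should force at least two vertices of $G$ to be unguarded in any $(k+1)$-guard configuration, which in turn lets the attacker target a $\star$-incident edge to penalize configurations that waste a guard off $\star$, thereby compelling $\star$ to hold a guard in any genuinely winning strategy. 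Establishing this compulsion rigorously—likely by an exchange/strategy-stealing argument showing that a winning strategy can be normalized to always keep $\star$ occupied—is the technical heart of the proof.
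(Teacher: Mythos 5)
Your reduction is exactly the paper's (add a universal vertex $\star$, ask for $k+1$ guards), and your forward direction is essentially the paper's strategy; the only thing to pin down there is \emph{which} guard returns to $\star$. The clean move, as in the paper, is: the guard on $u$ defends by moving $u \to v$ and then uses its second step $v \to \star$ (legal, since this is not retracing and $\star$ is adjacent to $v$), while the guard on $\star$ moves one step $\star \to u$. This restores the configuration $C \cup \{\star\}$ exactly, so the same response works forever. Your phrasing ``some other guard can reach $\star$ in one step'' is risky: if a guard from $C$ moves to $\star$, it vacates a cover vertex and your invariant breaks; it must be the defending guard itself that continues on to $\star$.

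The genuine gap is in your reverse direction, and it is a gap of misdiagnosis: the step you call ``the technical heart'' (forcing $\star$ to be occupied, via an exchange or strategy-stealing normalization) needs no such machinery, and your worry about ``transient'' two-step coverage is a non-issue. In this variant the defense must happen on a guard's \emph{first} step across the attacked edge, so if the pre-attack positions fail to be a vertex cover of $G^\star$, the attacker wins immediately by attacking an edge with both endpoints unguarded; this snapshot argument is unaffected by the second step (the paper records precisely this fact, $mvc \leq nevc$, just before the lemma). Now look only at the \emph{initial} configuration: it is a vertex cover of $G^\star$ of size $k+1 < n$. If $\star$ were unguarded, then since at most $k+1 \leq n-1$ vertices of $G$ carry guards, some $v \in V(G)$ is unguarded and the edge $\star v$ is uncovered --- contradiction. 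So $\star$ is guarded, at most $k$ guards lie in $V(G)$, and since $\star$ covers no edge internal to $G$, those $\leq k$ guards must cover every edge of $G$. That is a vertex cover of $G$ of size at most $k$, and the proof is complete: no normalization of the strategy, and no analysis of any attack beyond the first, is needed.
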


\begin{proof}   
    Given an input instance $(G,k)$ (where $|V(G)|=n$ and $k<n-1$) of the \emph{Vertex Cover} problem, we construct an equivalent instance of the \emph{New Eternal Vertex Cover} problem in polynomial time.
    
    Construct a graph $H$ by adding a global vertex $\star$ to the graph $G$, i.e., make the new vertex $\star$ adjacent to every other vertex in $G$. We show that $G$ has a vertex cover of size $k$ if and only if the defender has a winning strategy in the new setting using $k+1$ guards.

    \begin{figure}
        \centering
        \begin{tikzpicture}[scale=0.5]
            \draw [RoyalBlue,thick,rounded corners] (-5,0) rectangle (0,5); 
             \draw[fill=BlueViolet] (3,2.5) circle (0.5cm);
             \node at (3,2.5){$\star$};
            \draw[fill=Orchid] (-3,1) circle (0.3cm);
            
            \draw[fill=Orchid] (-4,2) circle (0.3cm);
            \draw[fill=Orchid] (-2,2) circle (0.3cm);
            \draw[fill=Orchid] (-3,4) circle (0.3cm);
            \draw[fill=Orchid] (-1,4) circle (0.3cm);
            \draw[fill=Orchid] (-0.8,0.8) circle (0.3cm);
            \draw[thick](-3.9,2.3)--(-3.2,3.8);
            \draw[thick](-2.7,4)--(-1.3,4);
            \draw[thick](-2.9,3.7)--(-2.2,2.2);
            \draw[thick](-1.1,3.7)--(-1.8,2.2);
            \draw[thick](-3.7,2)--(-2.3,2);
            \draw[thick](-3,1.3)--(-3,3.7);
            \draw[thick](-0.8,1.1)--(-1,3.7);
            \draw[thick,dashed](2.5,2.5)--(0,4);
            \draw[thick,dashed](2.5,2.5)--(0,3);
            \draw[thick,dashed](2.5,2.5)--(0,2);
            \draw[thick,dashed](2.5,2.5)--(0,1);   
        \end{tikzpicture}    
        \caption{Construction of the reduced instance from Vertex Cover to the New Eternal Vertex Cover Problem. Graph $G$ is given in the blue box and the new graph $H$ is obtained by adding a global vertex $\star$ to $G$.}
        \label{fig:my_label}
    \end{figure}
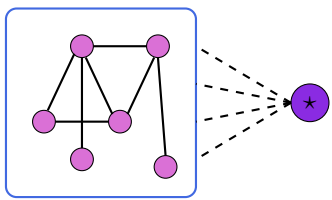
    
    Suppose $G$ has a vertex cover of size $k$, then consider the following arrangement of $k+1$ guards: Place one guard on each vertex of a vertex cover (of size $k$) and one guard on the global vertex $\star$. Now if the attacker attacks an edge which has guards on both of its endpoints, the guards exchange their positions and the configuration is restored.
    
    Suppose an edge $uv$ is attacked such that $u$ belongs to the vertex cover and $v$ does not, i.e., $u$ has a guard and $v$ does not, then the guard from $u$ goes to $v$ and the guard from $\star$ comes to $u$. The guard which is now on $u$ can move one more step and thus moves to $\star$. Thus the initial configuration is restored. This proves the forward direction.
    
    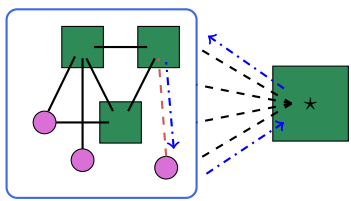
\begin{figure}[H]
        \centering
        \begin{tikzpicture}[square/.style={regular polygon,regular polygon sides=4},scale=0.5]
            \draw [RoyalBlue,thick,rounded corners] (-7,0) rectangle (-2,5); 
             
             \node at (1,2.5)[square,draw,fill=SeaGreen,minimum size=40pt](){$\star$};
            \draw[fill=Orchid] (-5,1) circle (0.3cm);
            \draw[fill=Orchid] (-6,2) circle (0.3cm);
            \draw[fill=Orchid] (-2.8,0.8) circle (0.3cm);
            \node at (-5,4)[square,draw,fill=SeaGreen,minimum size=22pt](){};
            \node at (-3,4)[square,draw,fill=SeaGreen,minimum size=22pt](){};
             \node at (-4,2)[square,draw,fill=SeaGreen,minimum size=22pt](){};
            
             \draw[thick](-5.9,2.3)--(-5.2,3.75);
            \draw[thick](-4.7,4)--(-3.3,4);
            \draw[thick](-4.9,3.7)--(-4.2,2.3);
            \draw[thick](-3.1,3.7)--(-3.8,2.3);
            \draw[thick](-5.7,2)--(-4.3,2);
            \draw[thick](-5,1.3)--(-5,3.7);
            \draw[thick,IndianRed,dashed](-2.8,1.1)--(-3,3.7);
            \draw[thick,dashed](0.5,2.5)--(-2,4);
            \draw[thick,dashed](0.5,2.5)--(-2,3);
            \draw[thick,dashed](0.5,2.5)--(-2,2);
            \draw[thick,dashed](0.5,2.5)--(-2,1);  
          \draw[thick,<-,Blue,dash dot](-2.6,1.3)--(-2.8,3.6);  
          \draw[thick,<-,Blue,dash dot](0.3,2)--(-1.8,0.6);
           \draw[thick,->,Blue,dash dot](0.3,2.9)--(-1.7,4.3);
        \end{tikzpicture}    
        \caption{Suppose any edge in $G$ with exactly one endpoint in the vertex cover is attacked, the guard from this vertex moves outside the vertex cover along the attacked edge. The guard on the global vertex comes to the vertex which was just vacated and the vertex cover is now restored. The guard who had come just outside the vertex cover will now move to the global vertex. Thus the configuration is restored.}
    \label{fig:my_label}
    \end{figure}
        
For the reverse direction, consider an initial configuration of guards on $H$ in a winning strategy with $k+1$ guards. The vertices occupied by guards must form a vertex cover of $H$ otherwise the attacker wins in the first move itself. Any vertex cover of $H$ with $k+1$ guards must contain $\star$ because $k+1<n$. Therefore, there must be $k$ guards on the vertices of $G$. If their positions do not form a vertex cover, the attacker can attack an edge with both the endpoints unoccupied and win. Thus $G$ has a vertex cover of size $k$.
\end{proof}

\begin{lemma}\label{sufficient}
For any connected graph $G$, the defender has a winning strategy using $mvc(G)+1$ guards, i.e., $nevc(G)\leq mvc(G)+1$ for any graph $G$.
\end{lemma}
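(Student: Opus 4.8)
The plan is to exhibit an explicit winning strategy for the defender that maintains the invariant that, in every resting configuration, the $mvc(G)+1$ guards occupy a \emph{minimum} vertex cover $S$ together with exactly one additional vertex $r$ (the \emph{spare} guard). Fix once and for all a minimum vertex cover $S$ of $G$; since $G$ is connected on at least two vertices we have $|S| = mvc(G) \le |V(G)|-1$, so there is a vertex $r \notin S$ on which to place the spare guard. Because the occupied set $O = S \cup \{r\}$ is a vertex cover, every edge of $G$ has an occupied endpoint; hence the attacker can never attack an edge with both endpoints unoccupied, and it suffices to describe a response to an attack on an edge $uv$ exactly one of whose endpoints is occupied (if both are occupied, the two guards simply exchange places across $uv$ and the configuration is restored).

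So suppose $uv$ is attacked with $u$ occupied and $v$ unoccupied. If $u = r$, the spare guard crosses $r \to v$ in its first step and the new occupied set is $S \cup \{v\}$, again a minimum vertex cover plus one spare, so the invariant is preserved. The interesting case is $u \in S$. Here the guard on $u$ is forced to cross $u \to v$ in its first step (it is the only guard able to defend the attack); this vacates $u$ and may leave some edges incident to $u$ uncovered. To repair this I would re-cover $u$ while simultaneously retiring the spare guard, transforming $O$ into the vertex cover $S \cup \{v\}$. Concretely, pick a shortest $u$--$r$ path $P = x_0 x_1 \cdots x_m$ (with $x_0 = u$, $x_m = r$), which exists by connectivity, and \emph{shift} the guards along $P$ one notch towards $u$: the first occupied vertex after $x_0$ sends its guard to $u$, each subsequent occupied vertex of $P$ sends its guard to the previous occupied vertex of $P$, and $r$ is emptied at the end. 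The net effect is that $u$ becomes occupied, $r$ becomes empty, every intermediate vertex retains its occupancy, and together with the guard already moved onto $v$ the resulting occupied set is $S \cup \{v\}$, which is a minimum vertex cover plus one spare.

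The main point to verify --- and the step I expect to carry the real content --- is that this shift can be realized within the two-step budget. This is exactly where both the extra guard and the vertex-cover structure are used. Since $O$ is a vertex cover, its complement is an independent set, so no two consecutive vertices of the (induced) shortest path $P$ are simultaneously unoccupied; consequently any two consecutive occupied vertices of $P$ are at distance at most two along $P$. Each guard participating in the shift therefore travels at most two edges, hopping over at most one unoccupied vertex, which is permitted since guards may pass through a vertex during reconfiguration, and the defending guard travels a single edge, so every move respects the rules. The spare guard supplies the ``sink'' at $r$ that absorbs the shift, which is precisely why $mvc(G)$ guards alone can fail (cf.\ the degree-one obstruction underlying \Cref{prop:nodeg1}) whereas $mvc(G)+1$ always suffice. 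Since every response returns to a configuration of the same form, the defender survives any infinite sequence of attacks, and hence $nevc(G) \le mvc(G)+1$.
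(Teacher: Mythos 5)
Your proof is correct and follows essentially the same strategy as the paper's: maintain a fixed minimum vertex cover $S$ fully guarded plus one spare guard outside it, defend by sending the guard on $u$ across the attacked edge, and then shift guards along a path from the spare's location back to $u$, using the fact that unoccupied vertices form an independent set to bound every guard's move by two steps. The only cosmetic difference is that you handle the shift uniformly along one shortest $u$--$r$ path, whereas the paper splits into two cases depending on whether the spare can reach $u$ through the cover alone; the underlying mechanism is identical.
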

\begin{proof}
For any graph $G$, we give a winning strategy for the defender in the new setting using $mvc(G)+1$ guards. We also refer the reader to~\Cref{fig:defense}.

\begin{figure}
    \centering
    \begin{subfigure}[b]{0.49\textwidth}
        \centering
        \includegraphics[width=\linewidth]{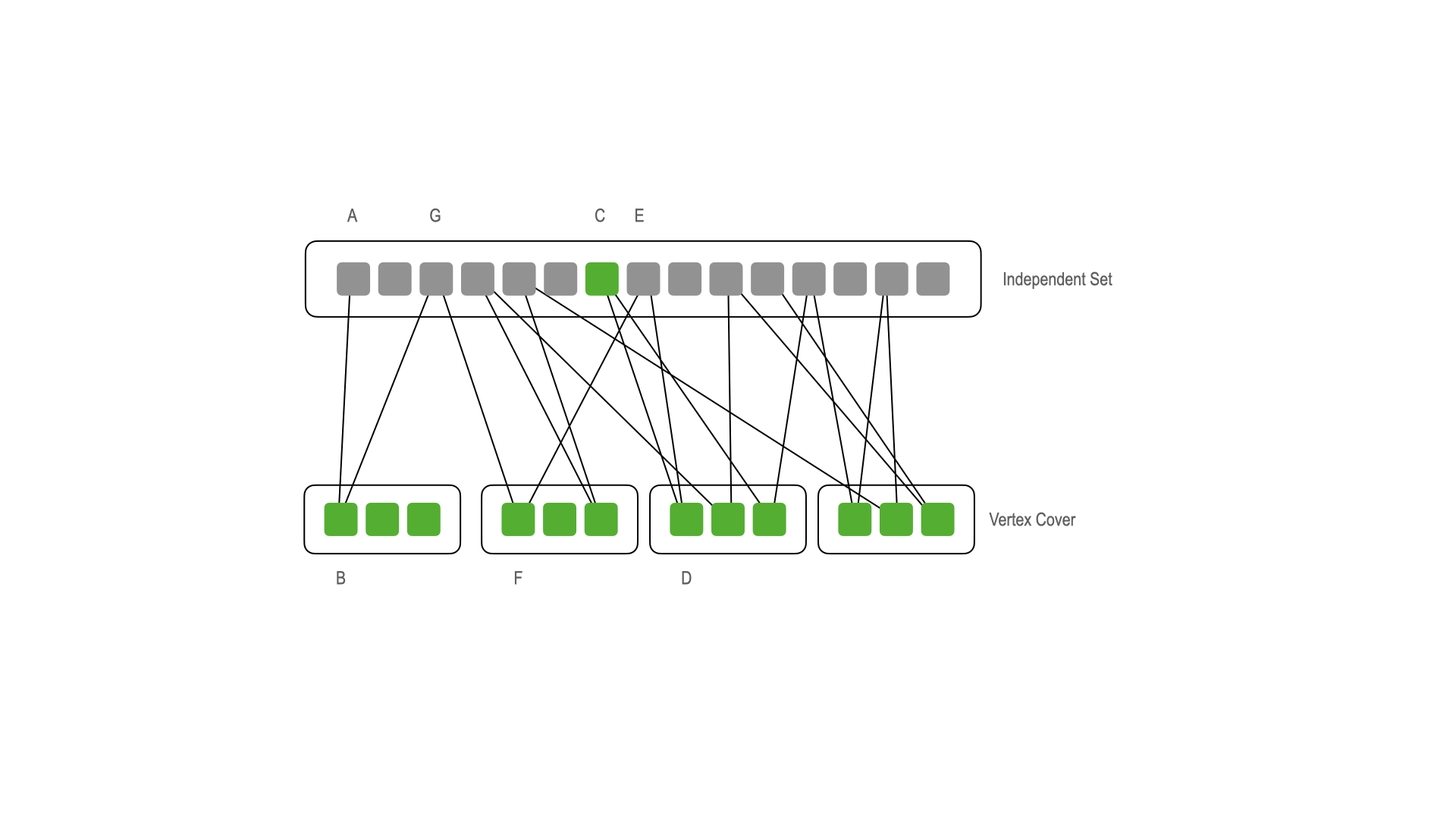}
        \caption{An example graph.}
        \label{fig:y equals x}
    \end{subfigure}
    \hfill
    \begin{subfigure}[b]{0.49\textwidth}
        \centering
        \includegraphics[width=\linewidth]{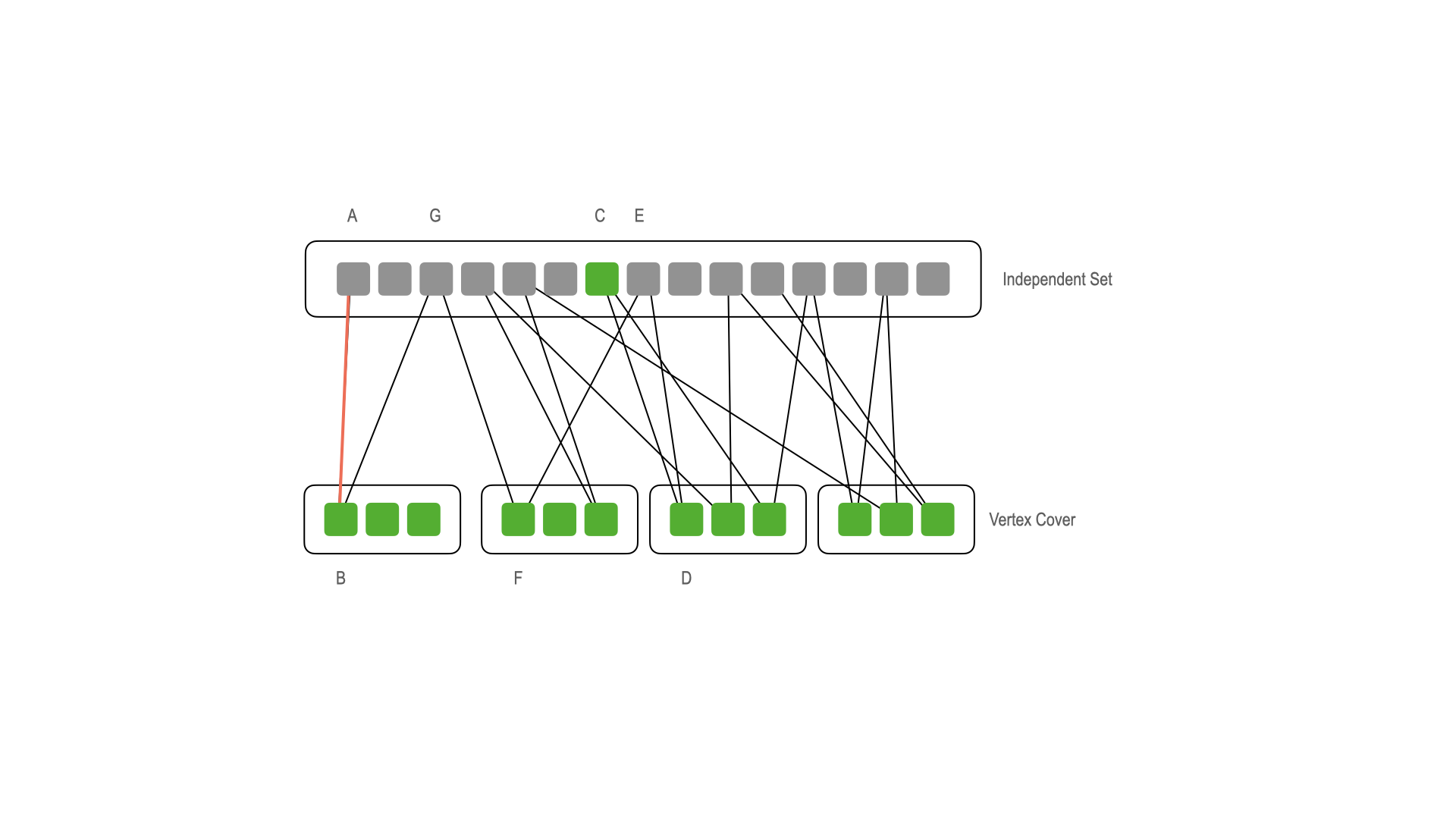}
        \caption{The edge AB is attacked.}
        \label{fig:three sin x}
    \end{subfigure}
    \hfill
    \begin{subfigure}[b]{0.49\textwidth}
        \centering
        \includegraphics[width=\linewidth]{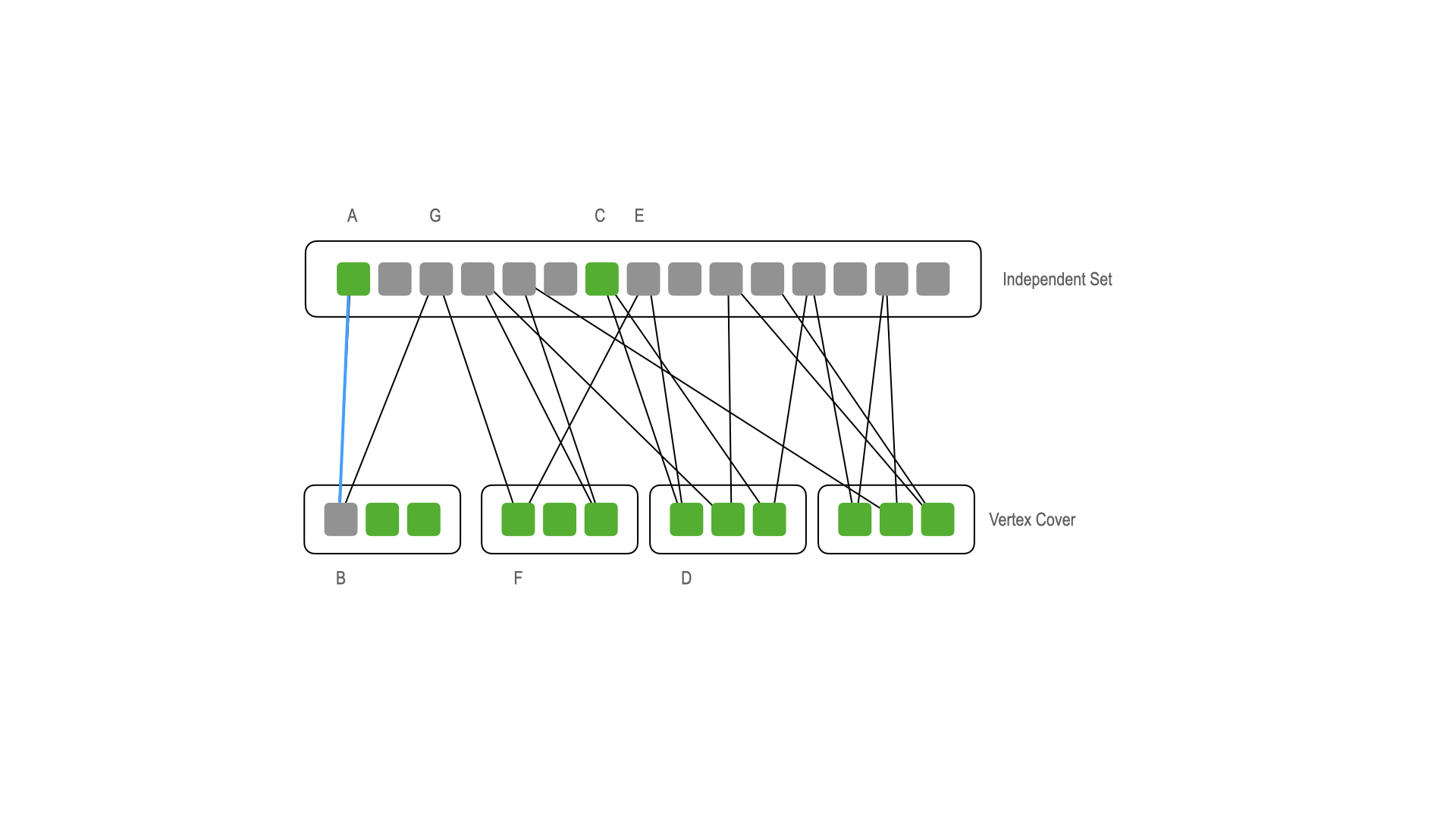}
        \caption{The guard on B moves to A.}
        \label{fig:y equals x}
    \end{subfigure}
    \hfill
    \begin{subfigure}[b]{0.49\textwidth}
        \centering
        \includegraphics[width=\linewidth]{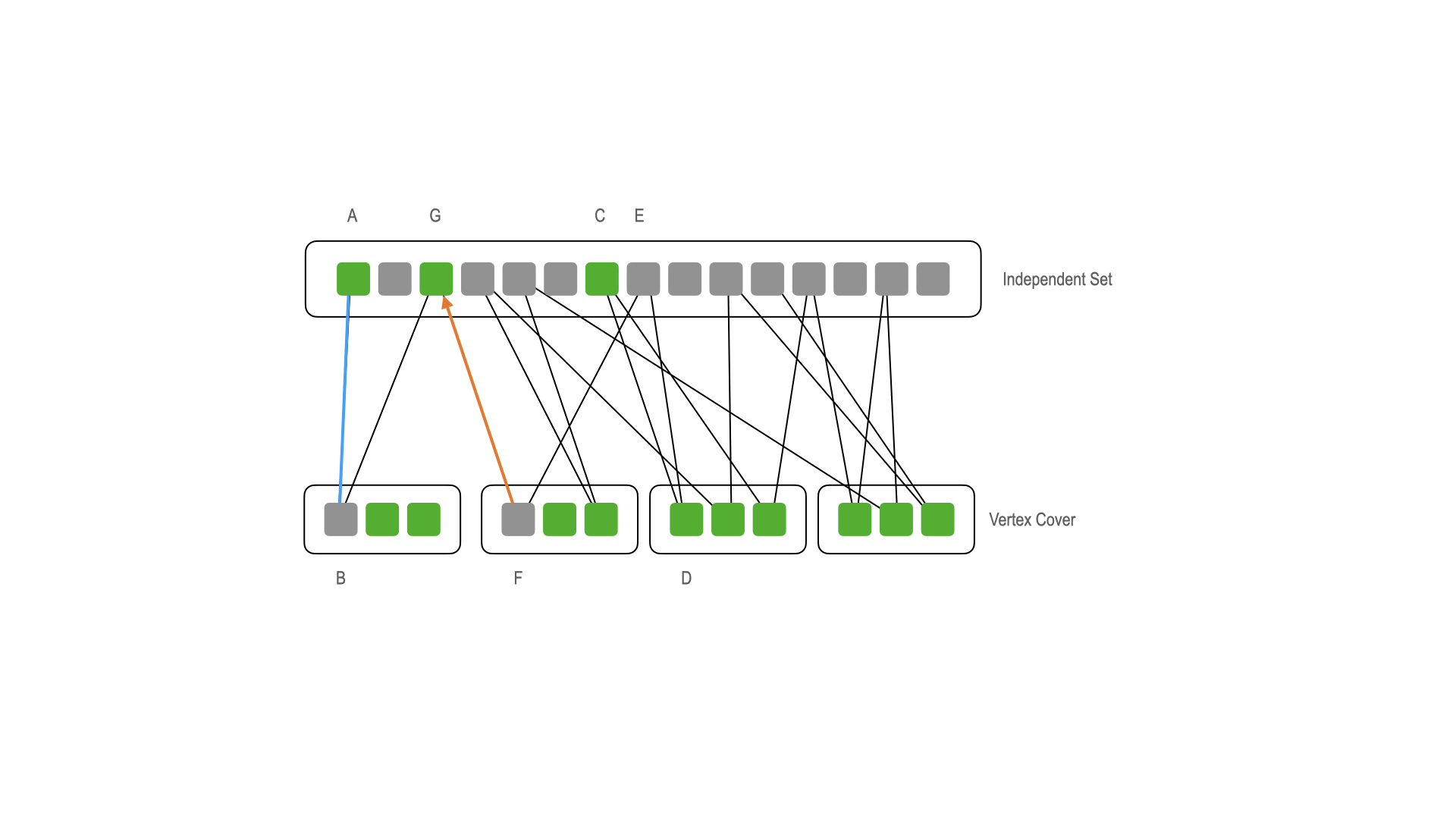}
        \caption{The guard on F moves to G...}
        \label{fig:three sin x}
    \end{subfigure}
    \hfill
    \begin{subfigure}[b]{0.49\textwidth}
        \centering
        \includegraphics[width=\linewidth]{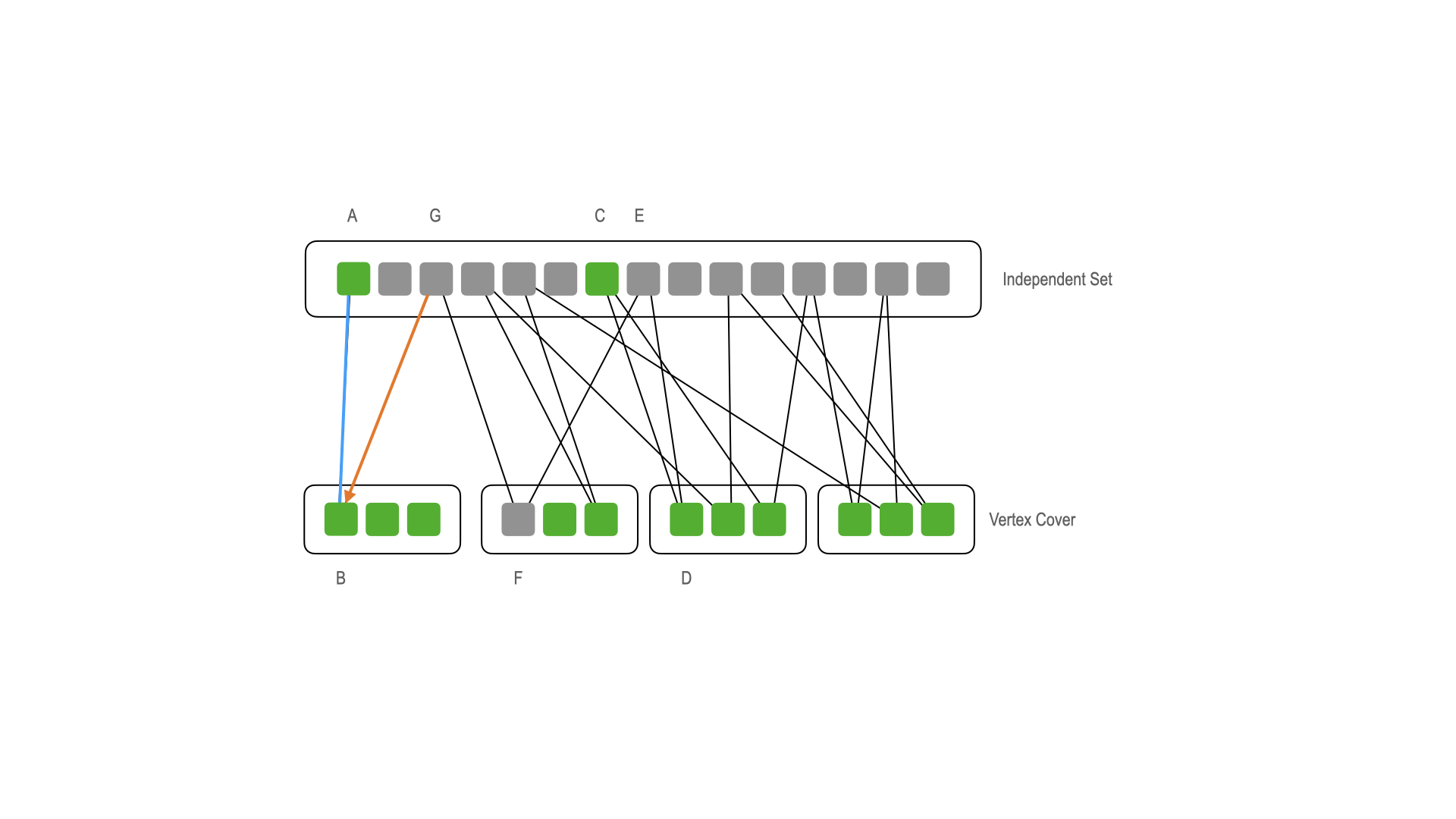}
        \caption{...and then to B.}
        \label{fig:y equals x}
    \end{subfigure}
    \hfill
    \begin{subfigure}[b]{0.49\textwidth}
        \centering
        \includegraphics[width=\linewidth]{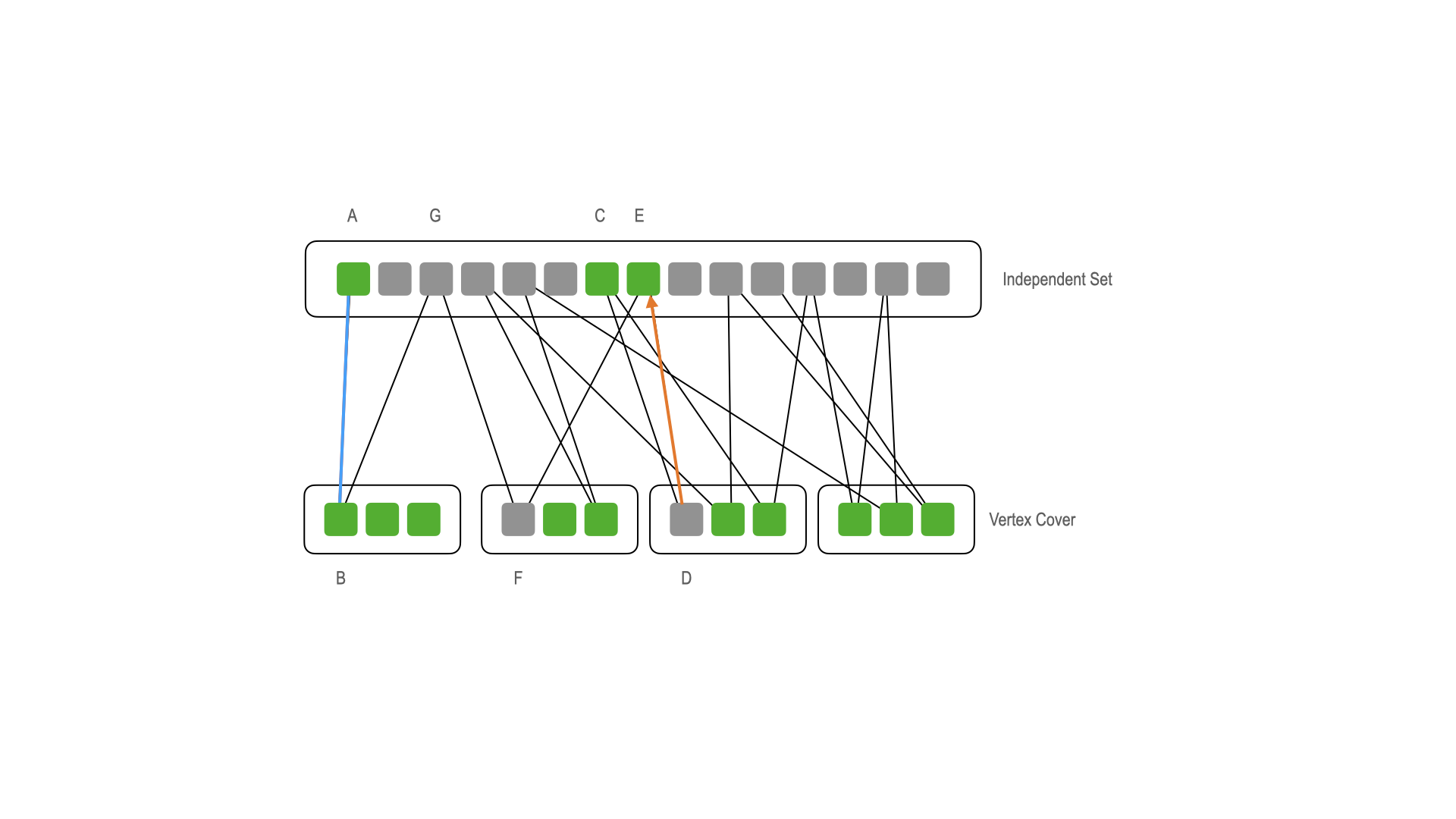}
        \caption{The guard on D moves to E...}
        \label{fig:three sin x}
    \end{subfigure}
    \hfill
    \begin{subfigure}[b]{0.49\textwidth}
        \centering
        \includegraphics[width=\linewidth]{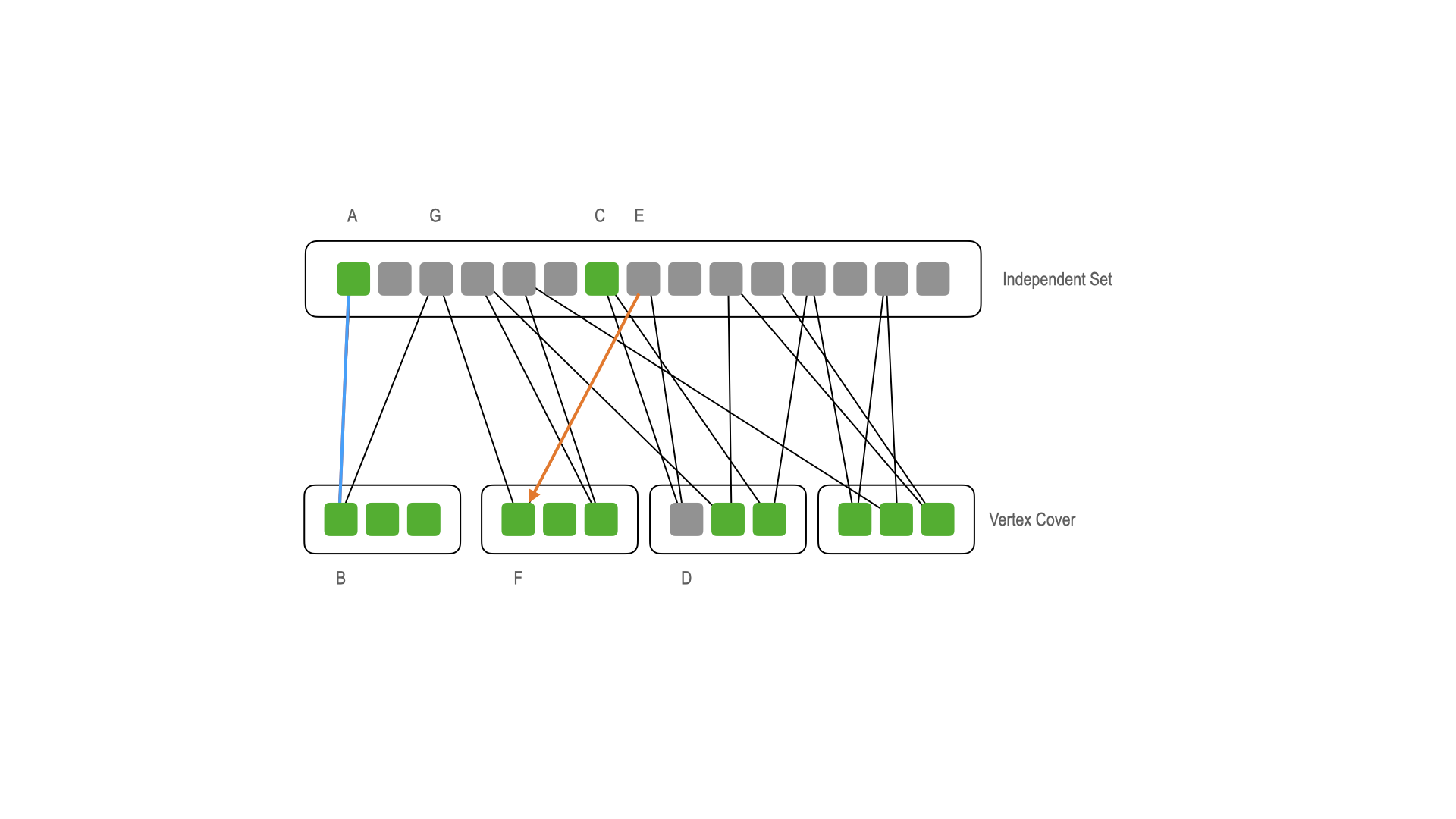}
        \caption{...and then to F.}
        \label{fig:y equals x}
    \end{subfigure}
    \hfill
    \begin{subfigure}[b]{0.49\textwidth}
        \centering
        \includegraphics[width=\linewidth]{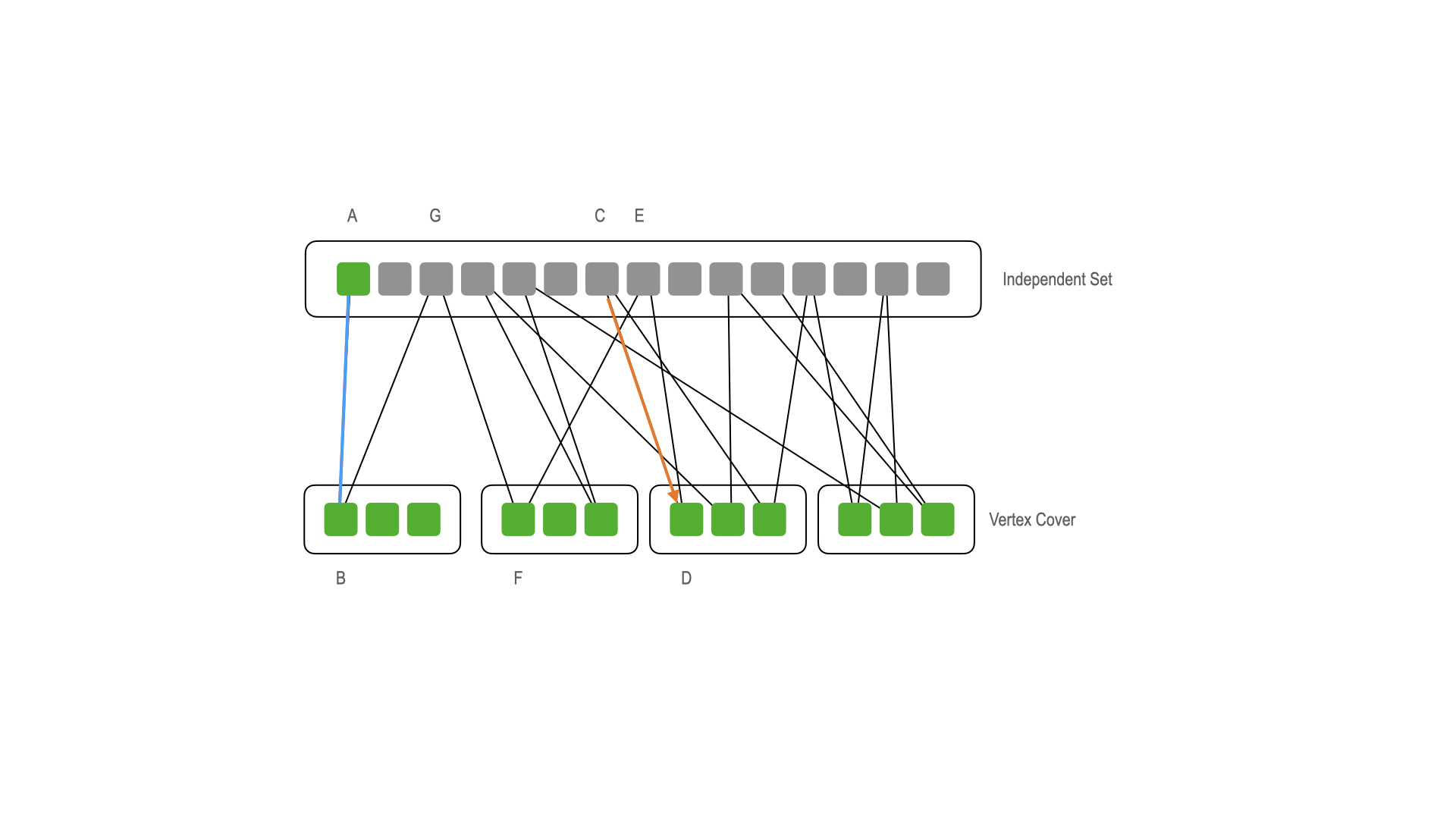}
        \caption{The (originallyextra) guard on C moves to D.}
        \label{fig:three sin x}
    \end{subfigure}
       \caption{Demonstrating a defense with $mvc(G) + 1$ many guards on a graph.}
       \label{fig:defense}
\end{figure}

Let $mvc(G)=k$ and let $C_1,C_2,\ldots,C_q$ be the connected components of the subgraph induced by the vertices of a $k$-sized vertex cover of $G$. Place one guard on each of these vertices of this vertex cover (say $S$) and one guard in the independent set (say on the vertex $w$).

Now suppose some edge is attacked such that both its endpoints have guards, then the guards can exchange places and the configuration is restored.

Suppose some edge $uv$ is attacked such that $u\in S$ and $v\in V(G)\setminus S$. Then the guard on $u$ moves to $v$. Suppose $w$ is adjacent to some vertex which lies in the same connected component of $S$ as $w$, then the guards can each move one step along a path from $w$ to $u$ such that there is a guard on $u$ and all the vertices of $S$ at the end of this movement and now no guard on $w$. But we still have all the vertices of $S$ with one guard each and one guard in the independent set (now on $v$). Thus we have a configuration just like the initial configuration and each strategy can be implemented here accordingly.

Suppose $u$ and any neighbour of $w$ do not lie in the same connected component of $S$. As $G$ is connected, there is a path from $w$ to $u$ in $G$ and this path contains some vertices of $V(G)\setminus S$. Let this path be $w=u_1 u_2 \ldots u_{\ell}=u$. We trigger a movement of guards along this path as follows: If $u_{\ell-1}$ has a guard, this guard moves to $u$, otherwise $u_{\ell-2}$ must have a guard because $S$ is a vertex cover. This guard moves to $u_{\ell-1}$ and then to $u$. Similarly for each $u_i\in S$ where $i\in \{\ell-2,\ell-3,\ldots,2\}$, if $u_{i-1}$ has a guard, the guard on $u_{i-1}$ moves to $u_i$. Otherwise $u_{i-2}$ must have a guard (because $S$ is a vertex cover) which moves to $u_{i-1}$ and then to $u_i$. It can be observed that at the end of this movement, all vertices in $S$ have a guard and $v$ has a guard and $w$ does not have a guard. Therefore, we have the same situation as before where all the vertices in the vertex cover have a guard and one guard outside the vertex cover. Thus we have shown that $mvc(G)+1$ many guards are sufficient to defend $G$ against any sequence of attacks in this new setting.    
\end{proof}

\begin{lemma}
If $G$ is any graph without a degree-$1$ vertex, then $nevc(G)=mvc(G)$. 
\end{lemma}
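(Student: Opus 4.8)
The plan is to refine the defense from \Cref{sufficient} so that the single ``extra'' guard sitting in the independent set is no longer needed. Since $mvc(G) \le nevc(G)$ always holds, it suffices to exhibit a winning strategy for the defender that uses exactly $mvc(G)$ guards whenever $G$ has no degree-one vertex; as in \Cref{sufficient} we may argue component by component (attacks in distinct components are independent, and no component of a degree-one-free graph has a degree-one vertex), and so assume $G$ is connected. I would fix a minimum vertex cover $S$ with $|S| = mvc(G) = k$, place one guard on each vertex of $S$, and maintain the invariant that after every defended attack the guards again occupy \emph{exactly} $S$.

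First I would dispose of the easy case: if an attacked edge has both endpoints in $S$, the two guards swap and the configuration is restored. The only other possibility is an attack on an edge $uv$ with $u \in S$ and $v \notin S$, since an edge with both endpoints outside $S$ cannot exist ($S$ being a vertex cover). Here the guard on $u$ first crosses to $v$, defending the attack on its first step and vacating $u$. The heart of the argument is then to refill $u$ and re-establish the invariant using only the $k$ guards already in play.

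The key step uses the no-degree-one hypothesis: because $\deg(v) \ge 2$, the vertex $v$ has a neighbour $x \neq u$, and since $v \notin S$ and $S$ is a vertex cover we must have $x \in S$. I would pick a simple path $P$ from $x$ to $u$ (which exists by connectivity) and run a relay along $P$: working backwards from the hole at $u$, each guard sitting on an $S$-vertex of $P$ advances to the next hole ahead of it. The crucial observation is that no two consecutive vertices of $P$ lie outside $S$ (otherwise the edge between them would be uncovered), so each relaying guard advances by at most two steps, possibly passing over a single independent-set vertex. This relay fills $u$ while creating a fresh hole precisely at $x$. Finally, the defending guard, currently at $v$ and having used only one of its two steps, moves $v \to x$ to fill that last hole. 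The net effect is that the guards once more occupy $S$, every guard has moved at most twice without retracing, and the attacked edge $uv$ was crossed on the first step --- so the invariant is preserved and the defender survives any sequence of attacks, giving $nevc(G) \le mvc(G)$ and hence equality.

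The main obstacle I anticipate is purely one of bookkeeping: verifying that the relay is always well-defined and terminates at $x$, and checking the degenerate situation in which the chosen path $P$ happens to pass through $v$ itself. In that case the relay would route a guard ``through'' $v$ simultaneously with the defending guard, but this is harmless because the rules explicitly permit several guards to cross a vertex during a single reconfiguration, and the final configuration still places exactly one guard per vertex of $S$. I would also emphasise where the hypothesis is essential: if $v$ had degree one, then after crossing $uv$ the defending guard could only retrace and would be stranded at $v$, forcing a genuinely extra guard and recovering the bound $nevc(G) = mvc(G) + 1$ of the degree-one case.
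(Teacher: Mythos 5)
Your proposal is correct and takes essentially the same route as the paper: both place the guards on a minimum vertex cover $S$, use the degree-at-least-two hypothesis to obtain a second neighbour $x \in S$ of the attacked vertex $v$, and restore $S$ by a relay of guards along a path between $x$ and $u$ in which each guard advances at most two steps (possible because no two consecutive path vertices lie outside $S$), which is exactly the mechanism the paper borrows from \Cref{sufficient}. The only difference is bookkeeping order: the paper sends the defending guard onward from $v$ and relays the surplus forward to the hole at $u$, whereas you run the relay first and let the defending guard fill the resulting hole at $x$ --- the net reconfiguration is identical.
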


\begin{proof}
It is sufficient to show that $nevc(G)\leq mvc(G)$, i.e., the defender has a winning strategy using $mvc(G)$ many guards. Also it is sufficient to only look at connected graphs.

Let $mvc(G)=k$ and let $S$ be a minimum sized vertex cover of $G$. The defender places one guard on each vertex of $S$. 
If any edge with both endpoints in $S$ is attacked, then the guards exchange their positions and we are back to the same configuration.

Suppose any edge $uv$ such that $u\in S$ and $v\notin S$ is attacked. The guard on $u$ moves to $v$. Since $v$ is not a degree $1$ vertex, it must have a neighbour $w$ other than $u$. Depending on whether $w$ and $u$ lie in the same connected component in $G[S]$ or not, we can trigger a movement of guards just like the proof of \Cref{sufficient} such that each guard moves at most two steps and there is no guard on $v$ at the end of this movement and there is a guard on $u$. Thus the vertex cover $S$ is restored and we have a winning strategy using $mvc(G)$ many guards.    
\end{proof}

\begin{lemma}\label{check}
For any graph $G$, $nevc(G)=mvc(G)$ if and only if for each degree-$1$ vertex $v$ of $G$, there exists a minimum sized vertex cover $S_v$ of $G$ which contains $v$.
\end{lemma}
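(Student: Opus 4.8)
The plan is to prove both directions, leaning on the bounds $mvc(G)\le nevc(G)\le mvc(G)+1$ already in hand (\Cref{sufficient} and the preceding lemma on degree-one-free graphs), so that the entire question reduces to whether the single extra guard is forced. I would first handle the forward direction by contraposition. Suppose some degree-one vertex $v$, with unique neighbour $u$, lies in no minimum vertex cover. Then $u$ lies in \emph{every} minimum vertex cover, since otherwise a minimum cover avoiding $u$ would be forced to contain $v$ to cover $uv$. With exactly $mvc(G)$ guards, every legal pre-attack configuration occupies a minimum vertex cover, so $u$ is always guarded and $v$ never is; the only way to meet an attack on $uv$ is to push the guard from $u$ to $v$, after which it is frozen (a second step would retrace $uv$, and $v$ has no other neighbour). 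The post-defence configuration then carries a guard on $v$, so were it a vertex cover it would be a minimum vertex cover containing $v$, contradicting our assumption. Hence the attacker wins on the following move and $nevc(G)=mvc(G)+1$.

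For the converse I would give the defender a strategy with $k:=mvc(G)$ guards that keeps them on a minimum vertex cover after every turn. Isolated $K_2$ components are trivial (one guard shuttling across the edge), so I may assume every leaf has a neighbour of degree at least two. Two cases are routine: if both endpoints of the attacked edge are guarded, swap them; and if an attacked edge $uv$ has $u$ in the current cover $S$ and $v\notin S$ with $\deg(v)\ge 2$, I would slide the guard $u\to v$ to defend and then restore $S$ exactly by the path-cascade of the degree-one-free lemma, which is available precisely because $v$ has a second neighbour to continue into; every guard then moves at most two steps.

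The substantive case is an attack on a leaf edge $uv$ with $v$ of degree one. Here the defending guard $u\to v$ is frozen on $v$, so I must reconfigure the remaining $k-1$ guards, currently on $S\setminus\{u\}$, into a minimum vertex cover of $G$ containing $v$. Since $v$ is good, a minimum cover $\{v\}\cup T$ exists with $T$ a minimum vertex cover of $H:=G-v$ of size $k-1$; moreover every minimum cover of $H$ omits $u$ and hence contains all of $N(u)\setminus\{v\}$, so the edges at $u$ that $S\setminus\{u\}$ leaves uncovered are automatically covered by the target. The hypothesis also forces each vertex to have at most one leaf neighbour (two leaf neighbours would make both leaves excludable from every minimum cover, violating goodness), so each uncovered neighbour $z$ of $u$ is a non-leaf lying outside $S$; as $S$ is a cover, $z$ has a neighbour in $S\setminus\{u\}$ that can feed a guard onto $z$. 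Iterating this along paths in $H$ produces a cascade transporting the $k-1$ guards from $S\setminus\{u\}$ onto such a $T$, each guard moving at most two steps, after which $\{v\}\cup T$ is the restored minimum cover.

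The step I expect to be the main obstacle is exactly this last reconfiguration: certifying that the cascade can always be routed within the two-step-per-guard budget and that it terminates. The slack should come from $mvc(H)=k-1<k$, which guarantees a redundant guard to absorb the end of each cascade, and the routing generalises the path-shifting argument of \Cref{sufficient}, where a guard may jump one independent vertex in two steps. I would emphasise that no single ``home'' cover can suffice: one can build graphs meeting the hypothesis in which the union of the required sets $\{u\}\cup(N(u)\setminus\{v\})$ over all leaves $v$ exceeds $mvc(G)$, so the defender must genuinely migrate between distinct minimum covers, and the crux is to show each such migration is realisable in a single turn.
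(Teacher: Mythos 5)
Your forward direction is sound and is essentially the paper's argument: with $mvc(G)$ guards every pre-attack configuration must be a minimum vertex cover, so $u$ is always guarded and $v$ never is, the attack on $uv$ freezes a guard on $v$, and since no minimum cover contains $v$ the next attack wins. Likewise, your handling of the non-leaf case of the converse (swap, or defend and restore $S$ by the cascade of \Cref{sufficient}) matches the paper. The genuine gap is exactly where you flag it: in the leaf case you never prove that the remaining $k-1$ guards can be re-seated on a minimum cover containing $v$ within the movement budget. ``Iterating a cascade along paths in $H$'' is a plan, not an argument; making it rigorous would require routing many guards along possibly overlapping paths, proving termination, avoiding collisions, and certifying that the final set is exactly the target cover, none of which is done.

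The paper closes this step with a Hall-type matching argument that your proposal is missing, and it makes the two-step machinery unnecessary in this case. Write $P = S\cap S_v$, so $|S\setminus P| = |S_v\setminus S|$ and $v\in S_v\setminus S$. No edge of $G$ joins two vertices of $S\setminus P$ (such an edge would be uncovered by $S_v$), and every edge from a vertex of $S\setminus P$ whose other endpoint lies outside $S$ must end in $S_v\setminus S$ (again because $S_v$ is a cover). Hence if some $A\subseteq S\setminus P$ had $|N(A)\cap(S_v\setminus S)|<|A|$, replacing $A$ by $N(A)\cap(S_v\setminus S)$ inside $S$ would yield a vertex cover of size less than $k$, a contradiction; so by Hall's theorem there is a perfect matching between $S\setminus P$ and $S_v\setminus S$. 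Since $u$ is the only neighbour of $v$, this matching necessarily pairs $u$ with $v$, so the defensive move is itself a matching edge, and every other guard of $S\setminus P$ moves exactly \emph{one} step along its matching edge while the guards on $P$ stay put. This lands the guards precisely on $S_v$ with no cascades, no two-step budget, and no termination issue. It also answers, affirmatively and cleanly, the ``crux'' you raise at the end about whether each migration between distinct minimum covers is realisable in a single turn; your observation that the hypothesis forces each vertex to have at most one leaf neighbour is correct but becomes unnecessary once the matching argument is in place.
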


\begin{proof}
Let $G$ be a graph such that $mvc(G)=k$ and there exists a degree-$1$ vertex $v$ such that no $k-$sized vertex cover of $G$ contains $v$. Let $u$ be the neighbour of $v$, then any minimum sized vertex cover of $G$ must contain $u$. Therefore, in the initial configuration there must be a guard on $u$ and no guard on $v$. If the attacker attacks the edge $uv$, then the defender must move the guard on $u$ to $v$. Since $v$ has no other neighbour than $u$ and retracing is not allowed, the guard on $v$ cannot move anywhere else. Now since there is no vertex cover of size $k$ which contains $v$, the guards cannot reconfigure to form a vertex cover, no matter how the other guards arrange themselves. Thus $mvc(G)$ many guards are not sufficient to protect $G$ and thus $nevc(G)\neq mvc(G)$.

Now suppose that for every degree-$1$ vertex $v$ of $G$, there exists a minimum sized vertex cover $S_v$ of $G$ which contains $v$. Let the size of the minimum sized vertex cover of $G$ be $k$. We now describe a strategy to defend an attack on $G$ with $k$ guards.
Place each guard on a vertex of a minimum sized vertex cover (say $S$) of $G$. Without loss of generality, we assume that some edge $uv$ such that $u\in S$ and $v \notin S$ is attacked. The guard on $u$ is forced to move to $v$. If $v$ is a vertex with degree $2$ or more, then we mimic the strategy in \Cref{sufficient} to get all the guards back on $S$. If $v$ is a degree-$1$ vertex, we show that it is possible to transfer the guards from $S$ to $S_v$ (where $S_v$ is the minimum sized vertex cover containing $v$). 

Denote $S\cap S_v$ by $P$ and $(V(G)\setminus S \cap S_v)\setminus\{v\}$ by $Q$. Here $P$ are the vertices which need to retain a guard and $Q$ are the vertices which need to gain a guard in order to reconfigure from $S$ to $S_v$. Let $|P|=p$ and $|Q|=q$. Clearly $p+q+1=k$, i.e., $k-p=q+1$. Therefore $|S\setminus P|= |Q\cup\{v\}|$. We show that there is a perfect matching between these two sets. Notice that there cannot be an edge with both endpoints in $S\setminus P$ because these vertices are not in $S_v$ and $S_v$ is a vertex cover. Suppose there is no perfect matching between $S\setminus P$ and $Q\cup\{v\}$, then there exists a set $A\in S\setminus P$ such that $|N(A)\cap (Q\cup\{v\})|< |A|$. But this is not possible as $(S\setminus P)\setminus A\cup N(A)$ will be a vertex cover of $G$ of size less than $k$. Thus there exists a perfect matching between $S\setminus P$ and $Q\cup\{v\}$ which can be used to reconfigure the guards from $S$ to $S_v$.
\end{proof}

\begin{corollary}
   There exists a polynomial time algorithm for finding the \emph{New Eternal Vertex Cover} number on bipartite graphs. 
\end{corollary}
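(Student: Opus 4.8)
The plan is to read off the algorithm from the two preceding lemmas together with the tractability of matching in bipartite graphs. Combining the general lower bound $mvc(G)\le nevc(G)$ with \Cref{sufficient} gives $mvc(G)\le nevc(G)\le mvc(G)+1$ for every connected graph, so the only question is which of the two values is attained. \Cref{check} resolves this exactly: $nevc(G)=mvc(G)$ precisely when every degree-$1$ vertex lies in some minimum vertex cover, and consequently $nevc(G)=mvc(G)+1$ otherwise. Hence, for a connected graph, computing $nevc$ reduces to (i) computing $mvc(G)$ and (ii) deciding, for each degree-$1$ vertex $v$, whether $G$ has a minimum vertex cover containing $v$.

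Both tasks are polynomial for bipartite graphs. First, by K\"{o}nig's theorem $mvc(G)$ equals the size of a maximum matching, which is computable in polynomial time. For step (ii), I would use the criterion that a vertex $v$ lies in some minimum vertex cover of $G$ if and only if $mvc(G-v)=mvc(G)-1$: deleting $v$ from a minimum cover that contains it yields a cover of $G-v$ of size $mvc(G)-1$, and conversely adjoining $v$ to a minimum cover of $G-v$ of that size produces a minimum cover of $G$ through $v$; since in general $mvc(G-v)\ge mvc(G)-1$, equality is exactly the condition sought. As $G-v$ is again bipartite, $mvc(G-v)$ is evaluated by one more maximum-matching computation, so the test runs in polynomial time and is applied only to the at most $|V(G)|$ degree-$1$ vertices.

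To handle disconnected bipartite graphs, I would decompose $G$ into its connected components $C_1,\dots,C_q$ and invoke additivity of $nevc$: no edge joins distinct components and a guard can never cross between them, so every attack is confined to a single component and the guards placed in $C_i$ must defend $C_i$ unaided. This yields $nevc(G)=\sum_{i=1}^{q} nevc(C_i)$, and running the connected-case procedure on each $C_i$ and summing computes $nevc(G)$ in polynomial time overall.

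The individual steps are routine, so this corollary is essentially an assembly of earlier results; the only point requiring care is the membership criterion in step (ii), namely confirming that ``$v$ lies in some minimum vertex cover'' is equivalent to the efficiently testable identity $mvc(G-v)=mvc(G)-1$, and checking that this is precisely the condition demanded by \Cref{check}. Everything else follows from the polynomial-time solvability of maximum matching on bipartite graphs.
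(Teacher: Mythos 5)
Your argument is correct and follows the same overall plan as the paper: combine the bound $nevc(G)\le mvc(G)+1$ from \Cref{sufficient} with the characterization in \Cref{check}, so that computing $nevc$ on a bipartite graph reduces to computing $mvc(G)$ (via K\"{o}nig's theorem and maximum matching) and testing, for each degree-one vertex $v$, whether some minimum vertex cover contains $v$. Where you genuinely diverge is in how this membership test is implemented, and your version is the more careful one. The paper tests whether $G[V(G)\setminus N[v]]$ --- that is, $G$ with both $v$ and its unique neighbour $u$ deleted --- has a vertex cover of size $mvc(G)-1$. Taken literally, that criterion is vacuous: every minimum vertex cover of $G$ contains exactly one of $u,v$, and removing that vertex from the cover always leaves a cover of $G-\{u,v\}$ of size $mvc(G)-1$, so the condition holds for \emph{every} graph. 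For instance, in the star $K_{1,3}$ with centre $u$ and leaf $v$, the deleted graph is edgeless and the condition passes, yet the unique minimum cover $\{u\}$ omits $v$ (and indeed $nevc(K_{1,3})=mvc(K_{1,3})+1$). The defect is that deleting $u$ also discards the edges incident to $u$, which a cover of $G$ avoiding $u$ must still cover through their other endpoints. Your criterion --- $v$ lies in some minimum vertex cover if and only if $mvc(G-v)=mvc(G)-1$ --- deletes only $v$; both directions of the equivalence you check are valid, and each test costs one additional bipartite matching computation. You also handle disconnected graphs explicitly via additivity of $nevc$ over components, a point the paper leaves implicit (\Cref{sufficient} is stated for connected graphs). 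So your proposal is not merely correct: at the one step where it departs from the paper's proof, it repairs it.
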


\begin{proof} 
    Suppose we have a bipartite graph $G$ with $mvc(G)=k$. Any minimum sized vertex cover cannot contain both the degree-$1$ vertex and its neighbour. As shown in \Cref{check}, we need to check if for each degree-$1$ vertex $v$ of $G$, whether there exists a vertex cover of size $k$ of $G$ which contains $v$, i.e., whether there exists a vertex cover of size $k-1$ of $G[V(G)\setminus N[v]]$.  
    
    Since the number of degree-$1$ vertices is bounded by the total number of vertices and the subgraph of a bipartite graph is bipartite and vertex cover is polynomial time solvable on bipartite graphs, we have a polynomial time algorithm for checking if $nevc(G)=mvc(G)$ or not if $G$ is bipartite. 
    
    Since $nevc(G)\leq mvc(G)$ for all graphs $G$, if for some bipartite graph $nevc(G)\neq mvc(G)$, then we have $nevc(G)=mvc(G)+1$. Thus we can determine the \emph{New Eternal Vertex Cover} number of a graph in polynomial time.    
    \end{proof}

Note that if we allow the guards to move for an arbitrary number of steps without retracing after each attack, \Cref{check} still holds, i.e., there are families of graphs (for instance, star graphs) such that even allowing the guards to move for an arbitrary number of steps does not make them ``Spartan'' in the new setting. Thus the power of one extra step subsumes the power of any number of additional steps.

\section{Concluding Remarks}

We showed that a natural sufficient condition for when a graph is Spartan (i.e, when $evc(G) = mvc(G)$) is also necessary in the context of bipartite graphs. Motivated by our proof we extend the notion of eternal vertex cover to a variant where the guards are allowed to move more than one step on their turn, and completely characterize the number of guards needed in terms of mvc: indeed, we show that one extra guard suffices, and is needed only if the graph has a degree one vertex that is not contained in any minimum vertex cover. 

It would be interesting to see what happens if the defense can happen in the second step of a guard. We also showed that while the new variant remains computationally hard, unlike the original problem, it is in fact solvable in polynomial time on bipartite graphs. Generalizing our structural results beyond bipartite graphs is an interesting direction for future work.

\bibliography{refs}

\end{document}